\title{A Latent Factor Panel Approach to Spatiotemporal Causal Inference}
\author{
Jiaxi Wu\thanks{This work was completed independently and does not relate to the author's position at Amazon.}\\
University of California, Santa Barbara
\and
Alexander Franks\\
University of California, Santa Barbara
}
\newtheorem{theorem}{Theorem}
\newtheorem{proposition}{Proposition}
\newtheorem{assumption}{Assumption}
\newcommand{\indep}{\perp \!\!\! \perp}
\newcommand{\Cov}{\operatorname{Cov}}
\newcommand{\E}{\mathbb{E}}
\newcommand{\pmtf}{$\text{PM}_{2.5}$ }
\begin{document}

\maketitle

% Factor-adjustments 

\begin{abstract}
Unmeasured confounding can severely bias causal effect estimates from spatiotemporal observational data, especially when the confounders do not vary smoothly in time and space. In this work, we develop a method for addressing unmeasured confounding in spatiotemporal contexts by building on models from the panel data literature and methods in multivariate causal inference.  Our method is based on a \emph{factor confounding} assumption, which posits that effects of unmeasured confounders on exposures and outcomes can be captured by a shared latent factor model. Factor confounding is sufficient to partially identify causal effects, even when there is interference between units. Additional assumptions that limit the degree of spatiotemporal interference, reasonable in most applications, are sufficient to point identify the effects. Simulation studies demonstrate that the proposed approach can substantially reduce omitted variable bias relative to other spatial smoothing and panel data baselines.  We illustrate our method in a case study of the effect of prenatal $\text{PM}_{2.5}$ exposure on birth weight in California.

% The estimated effect remains stable across a range of covariate adjustment sets, underscoring the method's robustness to the degree of unmeasured confounding.

\vspace{1.5em}
\noindent\textit{Keywords:} Causal inference; environmental health; factor models; panel data; spatiotemporal data; unmeasured confounding.

\end{abstract}

\clearpage

\section{Introduction}

% cite applications, compare with spatiotemporal causal inference?

Causal inference with spatial data is essential in disciplines where exposures are determined by geography rather than experimental design---for instance, environmental exposures in epidemiology \citep{bind2019causal} and neighborhood socioeconomic conditions in public health research \citep{diez2001investigating}. In such contexts, researchers cannot randomize who is exposed to polluted air, which census tract receives a policy intervention, or which areas are prioritized for infrastructure development. Instead, they must draw causal insights from observational data in which sampling units are embedded in space. The spatial configuration presents both opportunities---through natural variation and spatial dependence---and challenges, as exposures, covariates, and outcomes often co-vary spatially. However, as in any observational study, the validity of causal conclusions hinges on untestable assumptions about the confounding mechanism. The term \textit{spatial confounding} has been used to describe a number of phenomena arising from either the data generation process or the analysis model \citep{hodges2010adding, khan2023re}. Following the causal perspective of \citet{gilbert2021causal}, we focus on unmeasured confounders that simultaneously influence both the exposure and the outcome. 

% In analyses where unmeasured confounding is expected to be an issue, sensitivity analysis becomes a crucial tool for assessing the robustness of causal conclusions. In a sensitivity analysis, researchers investigate the plausible range of causal effects and how strong the confounding needs to be to explain all the observed association between the exposure and the outcome. Since the first formal sensitivity analysis by \citet{cornfield1959smoking} on the causal relationship between smoking and lung cancer, a variety of approaches have been developed for different statistical problems \citep{rosenbaum1983assessing, imbens2003sensitivity, tan2006distributional, vanderweele2017sensitivity, franks2019flexible, cinelli2020making, chernozhukov2022long}. 

Recent work in spatial statistics has examined the conditions under which different spatial and causal methods account for spatial confounding \citep{reich2021review}. For example, \citet{thaden2018structural} and \citet{dupont2022spatial+} employ two-stage procedures to partial out spatial confounding. \citet{papadogeorgou2019adjusting} incorporate spatial proximity into propensity score matching to adjust for unmeasured confounding, while \citet{schnell2020mitigating} model the joint distribution of the exposure and unmeasured confounders parametrically to mitigate bias. \citet{papadogeorgou2023spatial} propose a Bayesian parametric approach to account for entangled spatial confounding and interference. 
% Complementing these approaches, \citet{prim2025spectral} develop a spectral-domain adjustment for settings with multiple exposures and outcomes, projecting to scale-specific components and leveraging a local-unconfoundedness premise. 
Most existing approaches depend on parametric assumptions regarding the relationships between unobserved confounders and observed variables, often focusing on specific estimands or analysis frameworks. In contrast, \citet{gilbert2021causal} formalize a set of nonparametric identification assumptions for geostatistical (point-referenced) data, requiring the unmeasured confounder to be a fixed, measurable function of spatial location.

A common premise behind many adjustments for spatial confounding is that the exposure varies at a finer spatial scale than the spatial confounder \citep{paciorek2010importance, keller2020selecting, guan2023spectral}, or exhibits sufficient non-spatial variation to satisfy the positivity assumption required for causal identification \citep{bobb2022accounting, schnell2020mitigating, gilbert2021causal}. Scale-separation approaches operationalize this premise by decomposing spatial signals into frequency bands and filtering low-frequency confounding. In settings with multiple exposures and outcomes, \citet{prim2025spectral} develop a spectral-domain adjustment by projecting to scale-specific components under a local unconfoundedness condition. In practice, however, unmeasured confounders may vary over space, fluctuate over time, and display idiosyncratic differences across units, leading to mixed spatial and non-spatial structure. Such complexity limits the effectiveness of existing methods that rely solely on spatial (or spatiotemporal) smoothing to adjust for confounding bias.

To account for some of these complexities, we leverage ideas from the panel data literature.  Causal panel methods are designed to handle unobserved heterogeneity across multiple units measured repeatedly, most commonly applied to settings with binary exposures and well-defined pre- and post-treatment periods.  Popular approaches include classical two-way fixed effects, difference-in-differences, and synthetic control or factor-based methods \citep[see e.g.][for a review]{ben2021trial, arkhangelsky2024causal}. Of particular relevance is the interactive fixed effects (IFE) model of \citet{bai2009panel}, which represents unobserved heterogeneity through a low-rank decomposition of time-varying latent factors and unit-specific loadings. This method accommodates continuous exposures and yields asymptotically consistent estimates under linear outcome models, provided that both the cross-sectional and temporal dimensions are sufficiently large.

% Later extensions such as matrix completion and generalized synthetic control methods handle staggered adoption, missing counterfactuals, and other departures from the balanced panel design \citep{xu2017generalized, athey2021matrix}.

Our approach also builds on recent advances in causal inference with multiple exposures \citep{miao2022identifying, zheng2025copula} and multiple outcomes \citep{zheng2024sensitivity, kang2023partial}. With appropriate assumptions, causal effects can be partially identified, underscoring the importance of linking identification assumptions to causal conclusions in a transparent and interpretable manner \citep{zheng2022bayesian}. One common strategy leverages shared confounding via the use of proxy variables or negative controls to identify causal effects or detect bias \citep{shi2020selective, tchetgen2024introduction}. These methods rely on variables that are assumed to be associated with the unmeasured confounder but are known not to directly influence the outcome (negative control exposures) or not to be affected by the exposure (negative control outcomes). While negative controls have been explored in the context of time-series and longitudinal studies, their application to spatial settings remains limited, with some exceptions \citep[e.g.][]{lumley2000assessing}. 

% use air pollution levels from a distant city as a negative control exposure to detect bias in estimating local air pollution effects. However, their approach does not model spatial data or correct for confounding bias explicitly. 

\subsection{Contribution}

% some contrast(difference) with existing work unlike Gilbert, expand, First, second, third, define different kinds of confounding a little bit in plain english 

In this work, we develop a flexible framework for reasoning about causal effects in spatiotemporal settings with unmeasured confounding and potential interference.  \citet{gilbert2021causal} show that it is possible to identify causal effects from spatial data as long as unmeasured confounders are a deterministic measurable function of the spatial location.  However, even when such assumptions hold in principle, reliable estimation can fail in practice when the number of sampled sites is small or in applications with short-range correlations or non-stationarity. In order to handle such situations, we apply recent results related to multivariate causal inference with latent variables to account for residual spatiotemporal variation. Previous studies have shown that when there are multiple measurements on a unit sharing a common confounding mechanism, latent variable models can facilitate (partial) identification by leveraging the shared information---typically across multiple exposures or outcomes within each unit \citep{zheng2024sensitivity, zheng2025copula, kang2023partial}. We extend this idea of exploiting \emph{within-unit} associations to modeling \emph{between-unit} dependencies, where latent confounders induce correlations across space and/or time points. 

We show that when exposures and outcomes are linear in the unmeasured confounders, with appropriate conditions, factor models can facilitate partial identification of causal effects. We then establish precise limits on the degree of spatiotemporal interference that can be incorporated while maintaining causal identification. In addition, our approach is compatible with a wide range of models, including those that allow for treatment effect heterogeneity. We provide theoretical guarantees, conduct extensive simulations benchmarking our approach against commonly used methods, and demonstrate its practical utility through a case study estimating the effect of PM$_{2.5}$ exposure on birth weight in California.
%
%
% By explicitly modeling the cross-unit covariance, we can pool information and mitigate the confounding bias. In certain spatiotemporal applications, exposures such as future air pollutant concentrations or pollution from distant locations can serve as negative control exposures---variables that do not directly affect the outcome but provide information about unmeasured confounding \citep{shi2020selective}. Incorporating such controls within our spatiotemporal framework further improves identifiability by anchoring the shared confounding component.

% Air pollution studies illustrate the challenge: exposures and confounders such as ambient $\text{PM}_{2.5}$, co-emitted pollutants, traffic intensity, and neighborhood socioeconomic conditions often exhibit spatially correlated trends alongside localized variation. Addressing this general form of unmeasured confounding therefore calls for frameworks that accommodate multiple sources of variation while preserving transparent causal interpretation---precisely the problem we tackle in this work.

% For unit-specific confounding, we extend modern sensitivity analysis techniques to spatiotemporal data, yielding interpretable bounds on causal estimands under minimal assumptions. 

%  idiosyncratic factors unique to each location and time point, such as measurement error or localized disturbances.

% % We introduce the problem setting and relevant notations in Section \ref{sec:setting}. 

\section{Problem Setup}
\label{sec:setting}

% 1. What's the distribution of S to integrate over? observed locations can be biased. 2. Other estimands? assumption depends on DGP
% define Confounding types, notations, fixed vs random functions and variables
% 
% We focus this work on data at a fixed number of locations and time points, though the method naturally generalizes to cases with varying observation counts or independent, non-temporal, individual-level data.
% While we present the framework using point-referenced data for illustration, it also applies to areal data, as the partial identification strategy does not rely on the specific modeling choice for the observed data.

Let $\mathcal I=\{1,\dots,N\}$ index spatial locations and $\mathcal T=\{1,\dots,T\}$ index time points. For each location–time pair $(i,t)\in\mathcal I\times\mathcal T$, we observe $(D_{it},\,Y_{it},\,\mathbf X_{it},\,\mathbf S_i)$ drawn from a joint spatiotemporal data generating process $\mathcal{P}$, where $D_{it}\in\mathbb R$ is the exposure (or treatment), $Y_{it} \in \mathbb{R}$ is the outcome, $\mathbf{X}_{it}\in \mathbb{R}^p$ is a vector of observed pre-exposure covariates, which may include both unit-specific covariates and observed temporal features. $\mathbf{S}_i \in \mathbb{R}^{p_s}$ denotes the spatial coordinates of location $i$ (e.g., $p_s=2$ for longitude and latitude). Collect the exposures and outcomes into $\mathbf D= (D_{it}) \in\mathbb R^{N\times T}$ and $\mathbf Y=(Y_{it})\in\mathbb R^{N\times T}$, with $\mathbf D_t$ and $\mathbf Y_t$ denoting the $N$-vectors at time $t$, and $\mathbf D_i$ and $\mathbf Y_i$ denoting the $T$-vectors at location $i$.
% When appropriate, temporal indicators or trends can be incorporated into $\mathbf{X}_{it}$.

Let $Y_{it}(\mathbf d)$ be the potential outcome that would have been observed at $(i,t)$ had all exposures been set to $\mathbf d$ \citep{rubin1974estimating}. The dimensionality of potential outcomes can be reduced under the partial interference assumption \citep{reich2021review, papadogeorgou2023spatial}. Specifically, we assume that for each $(i,t)$, $Y_{it}$ depends only on exposures in the neighborhood $\mathcal N_{it} \subseteq \mathcal I\times\mathcal T$:
 
\begin{assumption}[Partial interference]
\label{asm:partial_inter}
For every $(i,t)$ and exposure $\mathbf d$, $Y_{it}(\mathbf d)=Y_{it}(\mathbf d_{\mathcal N_{it}})$, where $\mathbf d_{\mathcal N_{it}} :=\{d_{jk}:(j,k)\in\mathcal N_{it}\} \in\mathbb R^{|\mathcal N_{it}|}$.
\end{assumption}

Common choices include: no interference, where $\mathcal N_{it}=\{(i,t)\}$; spatial neighbors, where $\mathcal N_{it} = \{(j,t): j \in \mathcal N_i \subseteq \mathcal I\}$; or temporal interference such that $\mathcal N_{it} = \{(i,k): k \leq t\}$. It is common to further summarize the exposure through a measurable map $\mathbf A_{it} = f_{it}(\mathbf D_{\mathcal N_{it}})$, for instance $\mathbf A_{it}=(D_{it},\bar D_{it})$ with $\bar D_{it}$ the neighborhood mean \citep{aronow2017estimating}. The potential outcome can be written as $ Y_{it}(\mathbf d) = Y_{it}(\mathbf d_{\mathcal N_{it}}) = Y_{it}(\mathbf a)$. The population average treatment effect (PATE) comparing two interventions $\mathbf d^{(1)}$ and $\mathbf d^{(2)}$ is defined as
\[
   \operatorname{PATE}(\mathbf d^{(1)},\mathbf d^{(2)}) \coloneqq
   \E \left[
       Y_{it}\bigl(\mathbf d^{(1)}_{\mathcal N_{it}}\bigr)
       -Y_{it}\bigl(\mathbf d^{(2)}_{\mathcal N_{it}}\bigr)
   \right],
\]
where the expectation is taken over the population distribution $\mathcal{P}$. This estimand includes, as special cases, the direct effect, where the two exposure vectors differ only in the exposure of the index unit $(i,t)$, with all neighbors' exposures held fixed, and the indirect (interference) effect, where they differ only in the exposures of neighbors, while keeping the index unit's exposure fixed.
% different from individual or sample average treatment effect defined on the finite panel, where latent factors are assumed to be fixed

% or \tau(a_1,a_0;\bar{d}) = E[Y_{it}(a_1,\bar{d}) - Y_{it}(a_0,\bar{d})], naive = E[Y|D=a_1,\bar{d}, X,S]
% One causal estimand of interest is the population average treatment effect (PATE) between two exposure levels $d_1$ and $d_2$:
% \begin{equation}
%     \operatorname{PATE}_{d_1,d_2}\;:=\;
%     \mathbb E[Y_{it}(d_1)] - \mathbb E[Y_{it}(d_2)],
% \end{equation}
% with other estimands introduced later.

The unmeasured confounding is represented by the latent vector $\mathbf U_{it}$. If $\mathbf U_{it}$ were observed, the following assumptions would suffice to identify the causal effect nonparametrically:

\begin{assumption}[Latent positivity]
\label{asm:positivity}
$f_{\mathbf D_{\mathcal N_{it}}\mid \mathbf X_{it},\mathbf S_i,\mathbf U_{it}} (\mathbf d\mid \mathbf x, \mathbf s, \mathbf u) > 0$ for every $(\mathbf d, \mathbf x, \mathbf s, \mathbf u)$ in the support.
\end{assumption}

\begin{assumption}[Latent unconfoundedness]
\label{asm:unconfoundedness}
$Y_{it}(\mathbf d_{\mathcal N_{it}}) \indep \mathbf D_{\mathcal N_{it}} \mid (\mathbf X_{it},\,\mathbf S_i,\,\mathbf U_{it})$ for all $\mathbf d_{\mathcal N_{it}}$.
\end{assumption}

\noindent Assumptions \eqref{asm:partial_inter}-\eqref{asm:unconfoundedness} are relaxations of the standard exchangeability, positivity and SUTVA assumptions that are commonly made to identify causal effects from observational data. Under Assumptions~\ref{asm:partial_inter}–\ref{asm:unconfoundedness}, the expectation of the potential outcome is
\[
   \mathbb E\bigl[Y_{it}(\mathbf d_{\mathcal N_{it}})\bigr]
   \;=\;
   \mathbb E_{\mathbf X,\mathbf S,\mathbf U} [
       \mathbb E[Y_{it} \mid \mathbf D_{\mathcal N_{it}}=\mathbf d_{\mathcal N_{it}}, \mathbf X_{it},\mathbf S_i,\mathbf U_{it}]
   ],
\]
and any average causal contrast such as the PATE between two vectors $\mathbf d^{(1)}_{\mathcal N_{it}}$ and $\mathbf d^{(2)}_{\mathcal N_{it}}$ is identified.

Since $\mathbf U_{it}$ is unobserved, adjustment is limited to the observed covariates $(\mathbf X_{it},\mathbf S_i)$ and causal effects are not identifiable without additional assumptions. In spatial analyses, spatial smoothness assumptions provides a crucial bridge between observed data and unmeasured confounders. To mitigate confounding bias, a common approach is to model unmeasured confounding as a spatially varying confounder, assuming that its variability is fully captured by spatial location. However, in practice, $\mathbf U$ often exhibits both spatial and non-spatial variation, which makes adjustments based solely on $\mathbf X$ and $\mathbf S$ insufficient for identifying causal effects. We adopt this perspective throughout the paper. The confounding bias represents the discrepancy between the naive estimate under the assumption of no unmeasured confounding (NUC) and the true causal effect. For PATE, it can be expressed as
\begin{align*}
    \text{Bias}(\mathbf d^{(1)},\mathbf d^{(2)}) &= \mathbb E_{\mathbf X, \mathbf S}\bigl[
    \mathbb E[Y_{it}\mid \mathbf D_{\mathcal N_{it}} =\mathbf d^{(1)}_{\mathcal N_{it}},\mathbf X_{it},\mathbf S_i ] - \mathbb E[Y_{it}\mid\mathbf D_{\mathcal N_{it}} =\mathbf d^{(2)}_{\mathcal N_{it}},\mathbf X_{it},\mathbf S_i]
    \bigr] \\ &\quad - \operatorname{PATE}(\mathbf d^{(1)},\mathbf d^{(2)}).
\end{align*}

% $U_{t} \in \mathbb{R}^{m_1}$ represents the time-varying confounders shared across locations, $U^{\text{loc}}_{i} \in \mathbb{R}^{m_2}$ represents location-specific confounders that remain constant over time, $U_{it} \in \mathbb{R}$ captures individual-level unmeasured confounders. For example, in the study of effect of air pollution on health outcomes, $U_{t}$ may include economic trends, national healthcare policy changes, and seasonal variations. $U^{\text{loc}}_{i}$ may capture regional socioeconomic status, industrial activity, and healthcare infrastructure. $U_{it}$ may reflect short-term fluctuations in local pollution, policy interventions, or weather conditions.

% \begin{assumption}[SUTVA]
% \label{asm:sutva}
% There are no multiple versions of the exposure, and there is no interference between $(i,t)$ units.
% \end{assumption}
% \begin{assumption}[Latent positivity]
% \label{asm:positivity}
% $f(D = d \mid X=x, S=s, U=u) > 0$ for all $d,x,s,u$ in the support.
% \end{assumption}
% \begin{assumption}[Latent unconfoundedness]
% \label{asm:unconfoundedness}
% $Y_{it}(d) \indep D_{it} \mid X_{it},S_i,U_{it}$ for all $d$.
% \end{assumption}
%
\noindent Next, we introduce our factor confounding model along with the assumptions under which the bias above becomes identifiable.

\section{Factor Confounding}
\label{sec:shared}

% We then turn to the more general setting of unit-specific unmeasured confounding and characterize the bias using interpretable sensitivity parameters in Section \ref{sec:ovb}.Sections \ref{sec:only} and \ref{sec:resolution} explore how additional structural assumptions can simplify the parametrization of sensitivity analysis. Since the unit-specific confounding is not affected by interference pathways, we introduce the approach at the unit level under a no-interference setting for clarity throughout Sections~\ref{sec:ovb}-\ref{sec:resolution}. The resulting sensitivity bounds can also be applied to naive estimates obtained under interference and shared confounding.

% While we demonstrate the method using spatial information, temporal coordinates can be incorporated into the observed data model as well.

% \subsection{Shared Confounding}

In this section, we formalize how additional assumptions about low-rank confounding, combined with factor models and negative-control-style assumptions, can help debias causal effect estimates. Throughout, we assume that, given all observables, the effect of unmeasured confounding on both exposures and outcomes is reflected in the residual dependence structure, which we model as a rank-$M$ matrix plus a diagonal component.  For this reason, we refer to our approach as the ``factor confounding'' model. Below, in Section \ref{sec:model_spec}, we start by outlining the core modeling assumptions of the factor confounding model used to account for residual spatial correlation. We provide identification results in Section \ref{sec:identification} and in Section \ref{sec:correlation_comparison}, we briefly discuss tradeoffs between modeling residual spatial vs residual temporal correlation. Finally, we describe our estimation strategy in Section \ref{sec:estimation}.

\subsection{Model Specification}
\label{sec:model_spec}

Here, we present our factor confounding model in a setting where unmeasured confounding is captured through residual spatial variation, with time points treated as independent replicates. This formulation yields a tractable representation of the bias and forms the basis for the identification results developed below. Later, we discuss an alternative formulation based on residual temporal correlation. Localized neighboring or lagged exposures can likewise be incorporated through the specification of the exposure neighborhood in the outcome model. Below, we state a set of assumptions which are sufficient for identifying causal effects in the presence of unmeasured confounding. In Section \ref{sec:sim_misspec}, we show that the results are robust to several practically relevant violations of these assumptions.

\begin{assumption}[Factor confounding]
\label{asm:linear_U}
Conditional on $(\mathbf X_{it},\mathbf S_i)$, both $\mathbf{D_{t}}$ and $\mathbf{Y_{t}}$ are linear functions of an unmeasured $\mathbf U_{t} \in\mathbb{R}^{M}$:
\begin{align} 
\mathbf{D}_t &= f(\mathbf{X_t}, \mathbf{S}) + B\mathbf U_{t} + \bm{\xi}_t, \label{eqn:factor_d}\\
\mathbf{Y}_t &= g(\mathbf{D}_t, \mathbf{X_t}, \mathbf{S}) + \Gamma\Sigma_{U\mid D}^{-1/2} \mathbf U_{t} + \bm{\varepsilon}_t, \label{eqn:factor_y} 
\end{align}
where we denote
% \begin{align}
$\mu_{U\mid D} := \E[\mathbf U_t\mid \mathbf{D}_t, \mathbf X_t, \mathbf{S}]$ and 
% \label{eqn:u_given_d_def}\\
$\Sigma_{U\mid D} := \Cov(\mathbf U_t\mid \mathbf{D}_t, \mathbf X_t, \mathbf{S})$.
% \label{eqn:u_given_d_var_def}
% \end{align}
Marginally, $\mathbf U_{t}$ are i.i.d.\ mean-zero $M$-vectors. $\bm{\xi}_{t}$ and $\bm{\varepsilon}_{t}$ are independent mean-zero random vectors with diagonal covariance $\Lambda_D$ and $\Lambda_Y$ respectively. We assume that $\mathbf{D_t}$ does not affect future confounders, so that there is no treatment-confounder feedback \citep{hernan2020causal}. 

% $\Sigma_{U \mid D}$ is invariant to the level of $\mathbf X_t$.
\end{assumption}

% In the following section, we characterize the omitted confounder bias and establish partial identification regions for causal effects. 

\noindent 
% Under model \eqref{eqn:d_linear} - \eqref{eqn:y_linear}, $\Cov(vec(D) \mid X, S) = (AA^{T} \otimes I) + (I \otimes BB^{T}) + \Lambda_D$ and $Cov(vec(Y) \mid D, X, S) = (EE^T \otimes I) + (I \otimes HH^T) + \Lambda_Y$ where $A$ and $E$ are $N \times M$ matrices with rows $e_i$ and $a_i$ respectively, and $B$ and $H$ are $T \times M$ matrices with rows $b_t$ and $h_t$ respectively.  

% Given the inferential challenges associated with this sum of kronecker structured covariance matrices, throughout, we focus on modles with only of either time-invariant or spatially-invariant confounding. We later show empirically that even when the data is generated with time-invariant confounding, models fit assuming spatially-invariant confounding perform nearly as well.  In short, it not essential that we know exactly which of thw types of confounding is present.  As such, without loss of generality, we present the subsequent results assuming only spatially invariant confounders $\mathbf U^{\text{time}}_{t} \in \mathbb{R}^{M}$ ($M < N$), which are shared across all locations at time $t$, with location-specific confounding effects and spatial interference. All results carry over to time-invariant confounding by symmetry. 

\noindent Under this model, $\Cov(vec(\mathbf D) \mid \mathbf X, \mathbf S) = I_T \otimes (B B^{\top} + \Lambda_D)$ and $Cov(vec(\mathbf Y) \mid \mathbf D, \mathbf X, \mathbf S) = I_T \otimes (\Gamma \Gamma^{\top} + \Lambda_Y)$ where $B,\Gamma \in \mathbb R^{N\times M}$ are factor loading matrices. Any causal estimand can be written as a function of $g(\mathbf{D}_t) := (g_1(\mathbf{D}_t), \dots, g_N(\mathbf{D}_t)^\top$. 

To streamline notation moving forward, we suppress explicit conditioning on the observed covariates and spatial locations, $(\mathbf{X}, \mathbf{S})$, but all results hold conditionally on them. The bias arises from the naive regression of $\mathbf{Y}_t$ on $\mathbf{D}_t$ is given by the following proposition.

% \begin{figure}[h]
%   \centering
%   \includegraphics[width=0.4\linewidth]{figures/spatial_confounding_diagram.png}
%   \caption{Spatially-invariant shared confounding impacts both the exposures and outcomes at all $N$ locations.}
%   \label{fig:confounding_diagram}
% \end{figure}

\begin{proposition}[Bias of naive regression]\label{prop:naive-bias}
Under model~\eqref{eqn:factor_y}, the naive regression of the outcome vector $\mathbf{Y}_t$ on the exposure vector $\mathbf{D}_t$ gives the conditional moments
\begin{align}
    \E[\mathbf Y_t \mid \mathbf D_t]
        &= g(\mathbf D_t)+
           \Gamma\Sigma_{U\mid D}^{-1/2}
           \mu_{U\mid D}, \label{eqn:outcome_mean}\\
        \Cov(\mathbf Y_t \mid \mathbf D_t)
        &= \Gamma\Gamma^{\top} + \Lambda_Y. \label{eqn:outcome_var}
\end{align}
Consequently, the naive estimator of $g$ is additively biased by $\text{Bias}(\mathbf{d}) = \Gamma\Sigma_{U\mid D}^{-1/2} \mu_{U\mid D}$.
\end{proposition}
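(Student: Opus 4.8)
The plan is to treat both identities as direct computations of conditional moments: I would substitute the structural equation~\eqref{eqn:factor_y} for $\mathbf Y_t$ into $\E[\,\cdot\mid\mathbf D_t]$ and $\Cov(\,\cdot\mid\mathbf D_t)$, then apply the very definitions of $\mu_{U\mid D}$ and $\Sigma_{U\mid D}$ together with the independence structure of Assumption~\ref{asm:linear_U}. The only ingredients beyond linearity are that, conditional on $\mathbf D_t$ (and the suppressed $\mathbf X_t,\mathbf S$), the term $g(\mathbf D_t)$ is measurable and hence constant, and that $\bm\varepsilon_t$ is independent of $(\mathbf U_t,\bm\xi_t)$ and therefore of $\mathbf D_t$, since $\mathbf D_t$ is itself a function of $\mathbf U_t$ and $\bm\xi_t$ through~\eqref{eqn:factor_d}.

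For the mean in~\eqref{eqn:outcome_mean}, I would take $\E[\,\cdot\mid\mathbf D_t]$ of the right-hand side of~\eqref{eqn:factor_y} term by term. The deterministic piece $g(\mathbf D_t)$ passes through unchanged; the factor term contributes $\Gamma\Sigma_{U\mid D}^{-1/2}\E[\mathbf U_t\mid\mathbf D_t]=\Gamma\Sigma_{U\mid D}^{-1/2}\mu_{U\mid D}$ by the definition of $\mu_{U\mid D}$; and the noise term vanishes because $\E[\bm\varepsilon_t\mid\mathbf D_t]=\E[\bm\varepsilon_t]=\mathbf 0$ by independence and mean-zeroness. Summing the three pieces gives the stated expression, and reading off the departure from the true regression surface $g(\mathbf D_t)$ yields the additive bias $\Gamma\Sigma_{U\mid D}^{-1/2}\mu_{U\mid D}$.

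For the covariance in~\eqref{eqn:outcome_var}, I would again condition on $\mathbf D_t$, so that $g(\mathbf D_t)$ and the matrix $\Sigma_{U\mid D}^{-1/2}$ (a measurable function of $\mathbf D_t$) are both fixed. The constant $g(\mathbf D_t)$ drops out of the covariance, the cross-term between $\mathbf U_t$ and $\bm\varepsilon_t$ vanishes by independence, and what remains is $\Gamma\Sigma_{U\mid D}^{-1/2}\Cov(\mathbf U_t\mid\mathbf D_t)\Sigma_{U\mid D}^{-1/2}\Gamma^{\top}+\Cov(\bm\varepsilon_t\mid\mathbf D_t)$. Substituting $\Cov(\mathbf U_t\mid\mathbf D_t)=\Sigma_{U\mid D}$ and $\Cov(\bm\varepsilon_t\mid\mathbf D_t)=\Lambda_Y$, the whitening cancellation $\Sigma_{U\mid D}^{-1/2}\Sigma_{U\mid D}\Sigma_{U\mid D}^{-1/2}=I_M$ collapses the first term to $\Gamma\Gamma^{\top}$, giving $\Gamma\Gamma^{\top}+\Lambda_Y$ for every value of $\mathbf D_t$.

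The computation is short, so the point worth flagging is where the assumptions do the real work rather than any serious analytic obstacle. The delicate step is the whitening cancellation: it requires $\Sigma_{U\mid D}^{-1/2}$ to be a genuine symmetric, invertible square root of the conditional confounder covariance, which is precisely why~\eqref{eqn:factor_y} was written with this normalization—it forces $\Cov(\mathbf Y_t\mid\mathbf D_t)$ to be free of $\mathbf D_t$ and to take the clean low-rank-plus-diagonal form $\Gamma\Gamma^\top+\Lambda_Y$. I would also take care to justify $\E[\bm\varepsilon_t\mid\mathbf D_t]=\mathbf 0$ and the vanishing cross-term directly from the stated independence of $\bm\varepsilon_t$ from $(\mathbf U_t,\bm\xi_t)$: since $\mathbf D_t$ is a function of $(\mathbf U_t,\bm\xi_t)$, this independence yields $\bm\varepsilon_t\indep\mathbf D_t$ and $\bm\varepsilon_t\indep\mathbf U_t\mid\mathbf D_t$, which is all the computation uses.
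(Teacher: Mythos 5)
Your proof is correct and takes essentially the same route as the paper: the paper treats Proposition~\ref{prop:naive-bias} as an immediate consequence of substituting the structural equation~\eqref{eqn:factor_y} into the conditional moments, and the identical computation (specialized to the Gaussian PPCA case, where $\mu_{U\mid D}=B^{\top}\Sigma_D^{-1}\mathbf D_t$ and $\Sigma_{U\mid D}$ is constant) appears as Step~1 of the proof of Theorem~\ref{thm:identifiability_shared_confounding}. Your care in pulling the measurable factor $\Sigma_{U\mid D}^{-1/2}$ through the conditioning, performing the whitening cancellation, and deriving $\bm\varepsilon_t \indep (\mathbf U_t,\mathbf D_t)$ from the stated independence of $\bm\varepsilon_t$ and $(\mathbf U_t,\bm\xi_t)$ is precisely the implicit content of that step.
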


\noindent This proposition highlights the parametric link between the outcome mean and covariance.  Under this model, exact recovery of $g$ is only possible if the bias term is zero, or under additional assumptions on the magnitude or structure of the causal effects. 

To complete the model specification, we need to specify a latent variable model for the exposures that leads to potentially identifiable expressions for $\mu_{U\mid D}$ and $\Sigma_{U\mid D}$. While many choices are possible, we focus here on the Gaussian probabilistic PCA formulation \citep{tipping1999probabilistic}.
\begin{assumption}[Probabilistic PCA exposure model]
Exposures are linear in Gaussian unmeasured confounders with additive Gaussian noise:
\begin{align}
\mathbf{D}_t &= B \mathbf U_{t} + \bm{\xi}_t, \label{eqn:exposure_factor}\\
\mathbf U_{t} &\sim \mathcal{N}_M(0, I_M), \\
\bm{\xi}_t  &\sim \mathcal{N}_N(0, \Lambda_D),  \label{eqn:exposure_residual}
\end{align}
where $\Lambda_D$ is a diagonal covariance matrix. 
\label{asm:ppca}
\end{assumption}

\noindent Under this specification, the exposure distribution is $\mathbf{D}_t \sim \mathcal N_N(0, \Sigma_{D})$ with $\Sigma_{D}= BB^{\top}+\Lambda_D$. The joint normality gives $\mathbf U_{t}\mid \mathbf{D}_t \sim \mathcal{N}_{M}(\mu_{U\mid D} , \Sigma_{U\mid D})$, where
%\begin{align}
$\mu_{U\mid D} = B^\top\Sigma_D^{-1}\mathbf{D}_t$ and %, \label{eqn:u_given_d}\\
$\Sigma_{U\mid D} = I_M - B^\top\Sigma_D^{-1}B. $\label{eqn:var_u_given_d}
%\end{align}

In practice, we find that our method is robust to violations of both Assumption \ref{asm:linear_U} and \ref{asm:ppca}. In particular, we show via simulation that we can recover nearly unbiased causal effects when  $\mathbf U$ is non-Gaussian or when the exposures and outcomes are nonlinear functions of $\mathbf U$ (see Section \ref{sec:sim_misspec}). There are several possible alternative approaches for relaxing Assumption \ref{asm:ppca}, but we do not explore them further here \citep[see e.g.][]{zheng2024sensitivity, zheng2025copula}.
% Although the normality assumption is strong, we show via simulation that our method is robust even when the outcomes and exposures are non-Gaussian (Appendix Figure \ref{fig:sim_nongaussian}). This is because our approach ultimately rests on the fact that unmeasured confounders are linear in the exposures.  This holds exactly under the probabilistic PCA model but approximately holds in a much wider class of models. Based on this observation, there are several possible approaches for relaxing Assumption \ref{asm:ppca}, but we do not explore them further here \citep[see e.g.][]{zheng2024sensitivity, zheng2025copula}

Finally, we note that our method is closely related to the interactive fixed effects (IFE) estimator proposed by \citet{bai2009panel}, which extends traditional fixed effects models by introducing low-rank time-unit interactions to capture unobserved heterogeneity.
%
% \begin{align}
% E[Y_{it} | D_{it}, U_{t}...] &= D_{it}\beta + \Gamma_i U_t \\
% E[D_{it}|U_t] &= \tau_i + \theta_t + \Gamma_{i}\Lambda_1 + \Gamma_i \Lambda_2U_t + \Lambda_3U_t + \lambda_i' F_t 
% \end{align}
%
Like our approach, the IFE model uses factor structure in both exposures and outcomes to account for heterogeneity and to de-bias estimates of exposure effects. However, there are crucial differences in both motivation and structure. First, \citet{bai2009panel} focus on parameter identification in an asymptotic framework, and do not explicitly address causal identifiability or omitted variable bias, nor do they account for the role of potential interference. Second, they treat all parameters as fixed effects, whereas we consider the implications of a model where shared confounders are random effects that can be marginalized out. This leads to significant performance gains over IFE when either $N$ or $T$ is small, as we show via simulations. Moreover, the IFE model requires outcomes to be linear in the exposures, while our approach can accommodate both nonlinear exposure effects and treatment effect heterogeneity across units. Below, we show how our formulation leads to explicit (partial) identification of causal parameters without relying on asymptotic arguments. We then highlight advantages of our model through simulations in Section \ref{sec:simulation} and the application in Section \ref{sec:application}.

\subsection{Identification under Factor Confounding}
\label{sec:identification}

To bound causal effects even in the presence of unrestricted interference, we must at minimum be able to identify the factor loadings in both the exposure model and the outcome model. The following standard assumption guarantees identifiability of the factor loadings up to rotation.

\begin{assumption}[Factor model identifiability]
\label{asm:factor}
The outcomes follow Equation~\eqref{eqn:factor_y} and the exposures satisfy \eqref{eqn:exposure_factor}-\eqref{eqn:exposure_residual}. Further, the factor loading matrices $B$ and $\Gamma$ are of rank $M$, and for each matrix, removing any row results in two disjoint submatrices, each of full rank $M$. Under this condition, $B$ and $\Gamma$ are identifiable up to rotations from the right.
\end{assumption}

This assumption typically requires $(N-M)^2-N-M\geq 0$ and each latent confounder affects at least three exposures and outcomes \citep{anderson1956statistical}. Under Assumption~\ref{asm:factor}, we obtain bounds on the omitted variable bias.

\begin{proposition}
Under model~\eqref{eqn:factor_y} and Assumptions~\ref{asm:ppca} - \ref{asm:factor} the causal effect function, $g(\cdot)$, is partially identified. Let $\check \Gamma$ and $\check B$ be any matrices such that $\check \Gamma \check \Gamma^{\top} = \Gamma \Gamma ^{\top}$ and $\check B \check B^{\top} = BB^\top$. Let $\check \gamma_i$ denote the $i$th row of $\check \Gamma$. Then the omitted variable bias for the causal effect of the exposure vector $\mathbf d$ on the outcome at site $i$ is
\begin{equation}
\text{Bias}(\mathbf{d})_i = \check \gamma_i \Theta \check \Sigma_{U \mid D}^{-1/2}\check B^{\top}\Sigma_{D}^{-1}\mathbf d \in \pm \parallel \check \gamma_i\parallel_2 \parallel\check \Sigma_{U \mid D}^{-1/2}\check B^{\top}\Sigma_{D}^{-1}\mathbf d\parallel_2
\label{eqn:bias_rotation}
\end{equation}
\noindent where $\Theta \in \mathcal{O}_M$ is an $M \times M$ orthogonal matrix.  The interval on the right-hand side of \eqref{eqn:bias_rotation} is identifiable for all $i$. $\Theta$, and hence $g(\cdot)$, are not identifiable without further assumptions.  
\label{prop:bias_bound}\end{proposition}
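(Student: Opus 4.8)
The plan is to start from the bias expression already derived in Proposition~\ref{prop:naive-bias} and substitute the closed-form conditional moments supplied by the probabilistic PCA model. Proposition~\ref{prop:naive-bias} gives $\text{Bias}(\mathbf d) = \Gamma\Sigma_{U\mid D}^{-1/2}\mu_{U\mid D}$, and Assumption~\ref{asm:ppca} yields $\mu_{U\mid D} = B^\top\Sigma_D^{-1}\mathbf d$, so that $\text{Bias}(\mathbf d) = \Gamma\Sigma_{U\mid D}^{-1/2}B^\top\Sigma_D^{-1}\mathbf d$. The whole argument then reduces to tracking how this fixed expression behaves under the rotational ambiguity in the factor loadings.

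First I would invoke Assumption~\ref{asm:factor}: the factor structure identifies $BB^\top$ (from $\Sigma_D$) and $\Gamma\Gamma^\top$ (from $\Cov(\mathbf Y_t\mid\mathbf D_t)$) but pins down $B$ and $\Gamma$ only up to right rotation. A short linear-algebra fact (consistent with Assumption~\ref{asm:factor}) — if $\check B\check B^\top = BB^\top$ with both factors $N\times M$ of rank $M$, then $\check B = B R_B$ for some $R_B\in\mathcal O_M$, and likewise $\check\Gamma = \Gamma R_\Gamma$ — lets me write $B = \check B R_B^\top$ and $\Gamma = \check\Gamma R_\Gamma^\top$. Substituting into $\check\Sigma_{U\mid D} := I_M - \check B^\top\Sigma_D^{-1}\check B$ shows $\check\Sigma_{U\mid D} = R_B^\top\Sigma_{U\mid D}R_B$; using the symmetric (principal) square root, which commutes with orthogonal conjugation, gives $\Sigma_{U\mid D}^{-1/2} = R_B\check\Sigma_{U\mid D}^{-1/2}R_B^\top$. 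Plugging everything back into the bias and cancelling the internal $R_B^\top R_B = I_M$ collapses the two rotations into a single orthogonal factor $\Theta := R_\Gamma^\top R_B\in\mathcal O_M$, yielding $\text{Bias}(\mathbf d) = \check\Gamma\,\Theta\,\check\Sigma_{U\mid D}^{-1/2}\check B^\top\Sigma_D^{-1}\mathbf d$; reading off row $i$ is the claimed equality.

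For the interval, set $\mathbf v := \check\Sigma_{U\mid D}^{-1/2}\check B^\top\Sigma_D^{-1}\mathbf d$, so that $\text{Bias}(\mathbf d)_i = \check\gamma_i\Theta\mathbf v$. Cauchy--Schwarz with $\|\Theta\mathbf v\|_2 = \|\mathbf v\|_2$ (orthogonality) gives $|\text{Bias}(\mathbf d)_i|\le\|\check\gamma_i\|_2\|\mathbf v\|_2$, which is the stated bound; for $M\ge 2$ I would argue sharpness by constructing $\Theta\in\mathcal O_M$ that sends $\mathbf v/\|\mathbf v\|_2$ to $\pm\check\gamma_i^\top/\|\check\gamma_i\|_2$ to attain both endpoints, then invoking continuity over the connected set $SO_M$ to fill the interval. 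To show the interval is \emph{identifiable}, I would verify that both norms are invariant to the choice of $\check B,\check\Gamma$ (any two valid choices differ by a right rotation, which preserves $\|\check\gamma_i\|_2$ and $\|\mathbf v\|_2$) and depend only on the identified quantities $\Sigma_D$, $BB^\top$, and $\Gamma\Gamma^\top$.

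The main obstacle is the non-identifiability claim, where I must show $\Theta$ genuinely ranges over all of $\mathcal O_M$ rather than being constrained. The subtlety is that relabeling the \emph{shared} latent $\mathbf U_t\mapsto Q^\top\mathbf U_t$ sends $B\mapsto BQ$ and $\Gamma\mapsto\Gamma Q$ \emph{together}, leaving the bias invariant, so within-model rotation creates no ambiguity. The ambiguity instead arises because estimation recovers $B$ and $\Gamma$ through two \emph{separate, unrelated} rotations $R_B$ and $R_\Gamma$, and the only observable that could tie them together — the cross-moment $\Cov(\mathbf Y_t,\mathbf D_t)$, whose confounding part is $\Gamma\Sigma_{U\mid D}^{-1/2}B^\top$ — is additively entangled with $\Cov(g(\mathbf D_t),\mathbf D_t)$, which is unknown precisely because $g$ is the object of interest. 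Hence no feature of $\mathcal P$ constrains the relative alignment $R_\Gamma^\top R_B$, so $\Theta$ is free over $\mathcal O_M$ and $g(\cdot)$ is only partially identified.
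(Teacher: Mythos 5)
Your proposal is correct and follows essentially the same route as the paper, whose argument for this proposition appears as Step~1 of the proof of Theorem~\ref{thm:identifiability_shared_confounding}: derive $\text{Bias}(\mathbf d)=\Gamma\Sigma_{U\mid D}^{-1/2}B^{\top}\Sigma_D^{-1}\mathbf d$ from Proposition~\ref{prop:naive-bias} and the probabilistic PCA posterior moments, then show that the two independent right-rotations of $B$ and $\Gamma$ collapse (via conjugation of $\Sigma_{U\mid D}^{-1/2}$, exactly as you compute) into a single orthogonal $\Theta$, which is the sole remaining degree of freedom. Your additions --- the Cauchy--Schwarz bound with sharpness over $\mathcal O_M$, the invariance of the interval to the choice of $\check B,\check\Gamma$, and the absorb-into-$g$ reasoning for why no feature of $\mathcal P$ pins down $\Theta$ --- are details the paper leaves implicit, and they are sound.
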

\noindent Proposition \ref{prop:bias_bound} shows that causal effects are only partially identified under the factor confounding assumption. Additional structural or spatial assumptions are needed to point identify the effects. In this work, we focus on situations with limited neighborhood interference. Specifically, we assume each outcome $Y_{it}$ depends only on exposures within a given neighborhood, that is $g_i(\mathbf{D}_t) = g_i\bigl( (D_{jt})_{j\in \mathcal N_i} \bigr)$.
Under this assumption, exposures at locations $j \notin \mathcal N_i$ act as negative controls for $Y_{it}$, that is, $\frac{\partial g_i(\mathbf{D}_t)}{\partial D_{jt}} = 0$ for all $\mathbf{D}_t$. In the special case where there is no interference, each outcome $Y_{it}$ depends only on its own local exposure $D_{it}$, i.e.  $g_i(\mathbf{D}_t) = g_i(D_{it})$. Next, we introduce a formal assumption that enables identification of causal effects, even under neighborhood interference.
%
% We highlight a few constraints that can refine identification:
% \begin{itemize}
%     \item Spatial negative controls: for certain exposure-outcome pairs $(i,j)$, the exposure $D_{it}$ is known a priori to have no direct effect on $Y_{jt}$, i.e. $\frac{\partial g_j(\mathbf{D}_t)}{\partial D_{it}} = 0$ for all $\mathbf{D}_t$.
%     \item No interference: each outcome $Y_{it}$ depends only on its own local exposure $D_{it}$, i.e.  $g_i(\mathbf{D}_t) = g_i(D_{it})$. This can be viewed as a large collection of negative‐control pairs for all $i\neq j$.
%
%     \item Temporal interference: each outcome $Y_{it}$ depends only on exposures up to time $t$; for example it may depend on $D_{it}$ and $D_{i,t-1}$.
%     \item Homogeneous effects: the direct causal effect is the same function across all locations $g_i(\cdot) = g_j(\cdot)$, $\forall i,j \in\mathcal{I}$.
%     \item Other structural assumptions on $g$ such as parametric linearity or smoothness constraints.
% \end{itemize}
%
%
% \begin{assumption}[Partially linear interference]
% The causal effect of all exposures on the outcome at site $i$ is an aribtrary function of the exposure at site $i$ plus a linear function of the exposures at 
% all neighbors of $i$.  That is, $$g(\mathbf{d}) = g_i(d_i) + \sum_{j \in \mathcal{N}_i} \alpha_j d_j.$$
% \end{assumption}
%
\begin{assumption}[Off-neighborhood rank]\label{asm:rank}
Let $\gamma_i^{\top}$ denote row $i$ of $\Gamma$ and $r_j$ denote column $j$ of matrix $R:=\Sigma_{U \mid D}^{-1/2}B^{\top}\Sigma_{D}^{-1} \in\mathbb R^{M\times N}$. For each unit $i$, define its interference neighborhood $\mathcal N_i\subseteq \mathcal I$ as the set of spatial locations $j$  satisfying $\frac{\partial g_i(\mathbf{D}_t)}{\partial D_{jt}} \neq 0$ for all $\mathbf{D}_t$. Then, there exist $M$ distinct indices
$I_\star:=\{i_1,\dots ,i_M\}\subseteq \mathcal I$ such that
\begin{enumerate}
\item $\{\gamma_{i_\ell}\}_{\ell=1}^M$ form a basis of $\mathbb R^{M}$;
\item for each $\ell=1,\dots ,M$, the submatrix
      \(
          R_{\,J(i_\ell)} :=\bigl(r_j\bigr)_{j\notin\mathcal N_{i_\ell}^{}} \in\mathbb R^{M\times(N-|\mathcal N_{i_\ell}^{}|)}
      \)
      has full row rank $M$.
\end{enumerate}
\end{assumption}
\noindent Intuitively, this assumption guarantees that at least $M$ units have enough exposures outside their own interference neighborhoods to span all $M$ latent confounding directions. In many applications, the number of latent confounders $M$ and the size of each neighborhood $\lvert\mathcal{N}_i\rvert$ are both much smaller than the number of spatial units $N$, so Assumption \ref{asm:rank} will hold. This assumption leads to the following identification result.

% \begin{lemma}\label{lem:spanning}
% Under Assumption \ref{asm:rank}, the set of outer products
% \( 
% S := \bigl\{\gamma_i r_j^{\top}: j\notin \mathcal N_i^{\text{spatial}}\bigr\} \subset \mathbb R^{M\times M}
% \)
% spans the full matrix space, i.e. $\operatorname{span}(S)=\mathbb R^{M\times M}$.
% \end{lemma}

% \begin{theorem}
% \label{thm:identifiability_shared_confounding}
% % interference usually at unit level
% Under model~\eqref{eqn:factor_y} and Assumptions~\ref{asm:ppca} - \ref{asm:rank},
% the observed data regression of $Y$ on $\mathbf D$ has expectation

% \[
% E[Y_{it} | \mathbf D_{t}] = \check g_i\bigl( (D_{jt})_{j\in \mathcal N_i^{}}\bigl)  + \sum_{j \notin \mathcal{N}_i} \kappa_{ij} D_{jt}.
% \]
% Further, define the bias matrix as
% \[
%      C_\Theta \;:=\; \check \Gamma\Theta\check \Sigma_{U\mid D}^{-1/2}\check B^{\top}\Sigma_{D}^{-1} \in\mathbb{R}^{N\times N}.
% \]
% % \vspace{0.5em}
% \noindent 
% $\Theta$ and hence $C_\Theta$ is point-identified via the set of constraints
% \begin{equation}
% (C_\Theta)_{ij} = \kappa_{ij} ~~ \forall i \text{ and } j\notin \mathcal{N}_i. \label{eqn:alphas}
% \end{equation}
% Given $C_\Theta$, each causal function is identified via

% % Then, $\Theta$ and hence $C_\Theta$ is point-identified from the second‐order moments of $(\mathbf D_t,\mathbf Y_t)$. Given $C_\Theta$, each causal function is identified via
% \begin{equation}
%    g_i\bigl( (D_{jt})_{j\in \mathcal N_i^{\text{spatial}}}\bigl)
%   \;=\;\E\left[Y_{it}-( C_\Theta\mathbf D_t)_i\mid D_{it}=d\right],
%   \; d\in\operatorname{supp}(D_{it}),\;
%   i=1,\dots,N.
% \end{equation}

% \end{theorem}

\begin{theorem}
\label{thm:identifiability_shared_confounding}
% interference usually at unit level
Under model~\eqref{eqn:factor_y} and Assumptions~\ref{asm:ppca} - \ref{asm:rank}, the causal effect functions, $g_i\bigl( (D_{jt})_{j\in \mathcal N_i}\bigl)$, are identified for all $i$.

% the observed data regression of $Y$ on $\mathbf D$ has expectation

% \[
% E[Y_{it} | \mathbf D_{t}] = \check g_i\bigl( (D_{jt})_{j\in \mathcal N_i^{}}\bigl)  + \sum_{j \notin \mathcal{N}_i} \kappa_{ij} D_{jt}.
% \]
% Further, define the bias matrix as
% \[
%      C \;:=\; \Gamma \Sigma_{U\mid D}^{-1/2} B^{\top}\Sigma_{D}^{-1} \in\mathbb{R}^{N\times N}.
% \]
% % \vspace{0.5em}
% \noindent 
% $C$ is point-identified via factor models on the exposures and outcomes and the set of constraints
% \begin{equation}
% (C)_{ij} = \kappa_{ij} ~~ \forall i \text{ and } j\notin \mathcal{N}_i. \label{eqn:alphas}
% \end{equation}
% Given $C$, each causal function is identified via
% \begin{equation}
  
%   \;=\;\E\left[Y_{it}-( C\mathbf D_t)_i\mid D_{it}=d\right],
%   \; d\in\operatorname{supp}(D_{it}),\;
%   i=1,\dots,N.
% \end{equation}
\end{theorem}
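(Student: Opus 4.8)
The plan is to reduce the entire identification problem to pinning down a single $M\times M$ orthogonal matrix, and then to show that the interference structure supplies exactly enough linear constraints to determine it. By Proposition~\ref{prop:naive-bias} together with Assumption~\ref{asm:ppca}, the conditional mean of the observed outcome is
\[
\E[Y_{it}\mid \mathbf D_t] = g_i(\mathbf D_t) + \gamma_i^{\top} R\,\mathbf D_t, \qquad R = \Sigma_{U\mid D}^{-1/2} B^{\top}\Sigma_D^{-1},
\]
so the omitted-variable bias at site $i$ is the \emph{linear} map $\mathbf d\mapsto \gamma_i^{\top}R\,\mathbf d$. The regression $h_i(\cdot):=\E[Y_{it}\mid \mathbf D_t=\cdot]$ is identifiable directly from the observed joint law of $(\mathbf Y_t,\mathbf D_t)$. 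From the exposure marginal I identify $\Sigma_D$ and $BB^{\top}$, and from $\Cov(\mathbf Y_t\mid\mathbf D_t)=\Gamma\Gamma^\top+\Lambda_Y$ I identify $\Gamma\Gamma^{\top}$; under Assumption~\ref{asm:factor} this fixes representatives $\check B=BQ_B$ and $\check\Gamma=\Gamma Q_\Gamma$ up to unknown right rotations $Q_B,Q_\Gamma\in\mathcal O_M$. Consequently $\check R:=\check\Sigma_{U\mid D}^{-1/2}\check B^{\top}\Sigma_D^{-1}=Q_B^{\top}R$ is identifiable, and as in Proposition~\ref{prop:bias_bound} the bias row equals $\check\gamma_i^{\top}\Theta\,\check R$ with the single remaining unknown $\Theta:=Q_\Gamma^{\top}Q_B\in\mathcal O_M$. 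Thus it suffices to identify $\Theta$.

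Next I exploit the interference restriction. For any $j\notin\mathcal N_i$ we have $\partial g_i/\partial D_{jt}\equiv 0$, so differentiating the identifiable regression $h_i$ in the direction $D_{jt}$ isolates the bias:
\[
\frac{\partial}{\partial D_{jt}}\,\E[Y_{it}\mid\mathbf D_t] \;=\; \gamma_i^{\top} r_j \;=\; \check\gamma_i^{\top}\Theta\,\check r_j,
\]
where $\check r_j$ is column $j$ of $\check R$. (Because the bias is linear, finite differences of $h_i$ in off-neighborhood coordinates return the same constants, so no smoothness of $g_i$ is actually required.) Collecting these identifiable scalars over all $j\notin\mathcal N_i$ yields the row identity $\check\gamma_i^{\top}\Theta\,\check R_{J(i)} = \mathbf c_i^{\top}$, where $\check R_{J(i)}$ is the off-neighborhood submatrix of $\check R$ and $\mathbf c_i^{\top}$ is known.

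I then apply Assumption~\ref{asm:rank}. Since $\check R=Q_B^{\top}R$ and $\check\Gamma=\Gamma Q_\Gamma$ differ from the true quantities only by invertible rotations, the full-row-rank and basis conditions transfer verbatim to $\check R_{J(i)}$ and $\{\check\gamma_{i_\ell}\}$. For each special index $i_\ell\in I_\star$, full row rank of $\check R_{J(i_\ell)}$ lets me solve the linear system above for the row vector $\check\gamma_{i_\ell}^{\top}\Theta$ through the right inverse of $\check R_{J(i_\ell)}$. Stacking the $M$ resulting rows gives $\check\Gamma_\star\Theta = V_\star$ with both $M\times M$ matrices identified; since $\{\check\gamma_{i_\ell}\}$ is a basis, $\check\Gamma_\star$ is invertible and $\Theta=\check\Gamma_\star^{-1}V_\star$ is uniquely determined. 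With $\Theta$ in hand the whole bias map $\mathbf d\mapsto\check\gamma_i^{\top}\Theta\,\check R\,\mathbf d$ is identified for every $i$, and subtracting it from $h_i$ recovers $g_i$.

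The main obstacle is conceptual rather than computational: recognizing that the two factor structures are each identified only up to \emph{independent} rotations, so that the whole indeterminacy collapses onto the relative rotation $\Theta$, and then observing that the off-neighborhood derivatives furnish constraints that are linear in $\Theta$ and rendered solvable by Assumption~\ref{asm:rank}. The one point needing care is the transfer of the rank and basis conditions from the true $(\Gamma,R)$ to the identifiable representatives $(\check\Gamma,\check R)$, which holds because they differ only by orthogonal, hence rank- and basis-preserving, transformations.
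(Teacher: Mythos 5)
Your proof is correct and follows essentially the same route as the paper: both reduce the problem to identifying the single relative rotation $\Theta \in \mathcal{O}_M$ between the two factor decompositions, and both use off-neighborhood derivatives of $\E[Y_{it}\mid \mathbf D_t]$ (zero by the interference restriction, hence pure bias) as the identifying constraints, resolved by Assumption~\ref{asm:rank}. The only difference is stylistic: where the paper establishes injectivity of $\Theta \mapsto C(\Theta)$ abstractly via a Frobenius-inner-product spanning argument, you solve for $\Theta$ constructively (right inverse of $\check R_{J(i_\ell)}$ for each $i_\ell \in I_\star$, then inversion of $\check\Gamma_\star$), which reaches the same conclusion and, as a bonus, mirrors the Procrustes-style estimation step in the paper's Algorithm~\ref{alg:factor_no_spillover}.
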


% \noindent By way of example, consider the simpler setting with linear effects and no interference. Then, the observed data regression of $Y$ on $D$ has mean
% $$ E[Y_{it} | \mathbf D_t] = \check \tau_i D_{it} + \sum_{j \neq i} \kappa_{ij}D_{jt}$$ 
% where $\check \tau_i = \tau_i + \kappa_{ii}$ is the biased causal effect coefficient from the naive observed data regression with bias $\kappa_i$.  Under the no interference assumption, for any $j\neq i$, if $\kappa_{ij} \neq 0$ it must be due to bias from unmeasured confounding.  
\noindent We briefly provide intuition for this result. Define the rank-$M$ bias matrix as $C \;:=\; \Gamma \Sigma_{U\mid D}^{-1/2} B^{\top}\Sigma_{D}^{-1}$.  Then we have conditional expectation $\E[Y_{it} \mid \mathbf{D_t}] = g_i((D_{jt})_{j\in \mathcal N_i}\bigl) + (C\mathbf{D_t})_i$.  For any $j \notin \mathcal{N}_i$, any observed association between $Y_{it}$ and $D_{jt}$ given all other exposures must be due to shared confounding, which then point identifies $C_{ij}$.  Assumption \ref{asm:rank} ensures that enough entries of $C$ are identified to recover the full bias matrix and thus to identify all causal effects.

\subsection{Modeling Spatial versus Temporal Correlation}
\label{sec:correlation_comparison}

So far, we have presented a model for residual spatial correlation in which we have replication in time. While this is typical in panel data settings, in some cases, where the residual spatial correlation is expected to be small (e.g. data with high resolution spatial sampling, as in the example in Section \ref{sec:application}), it may make more sense to apply our model to account for temporal correlation and treating spatial units as conditionally independent.  When modeling temporal correlation, all the results from the previous sections hold, but instead we have $\Cov(vec(\mathbf D) \mid \mathbf X, \mathbf S) = (BB^\top + \Lambda_D) \otimes I_N$ and $\Cov(vec(\mathbf Y) \mid \mathbf D, \mathbf X, \mathbf S) = (\Gamma \Gamma^\top + \Lambda_Y) \otimes I_N $, where $B$ and $\Gamma$ are $T \times M$ factor loading matrices.

In practice, we must choose whether to model residual spatial correlation and exploit replication in time, or instead model residual temporal correlation by leveraging replication over spatial units.  Somewhat surprisingly, we find that empirically, point estimates of causal effects are not particularly sensitive to whether dependence is modeled across space or across time. Finally, we note that our method can, in principle, be extended to models that account for low-rank residual correlations in both space and time. We leave this for future exploration, as covariance estimation and model selection are nontrivial in more general matrix normal models \citep[see e.g.][]{hoff2016limitations, efron2009set}.

\subsection{Estimation}
\label{sec:estimation}

In this work we primarily focus on  a fully Bayesian model for $(\mathbf D,\mathbf Y)\mid(\mathbf X,\mathbf S)$, which can be implemented using a probabilistic programming language such as Stan \citep{stan}. Prior distributions are placed on all model parameters, including the factor loading matrices $B$ and $\Gamma$. The likelihood is specified according to an outcome model based on the moments \eqref{eqn:outcome_mean}-\eqref{eqn:outcome_var} and the probabilistic PCA exposure model \eqref{eqn:exposure_factor} - \eqref{eqn:exposure_residual}. Inference via MCMC (or variational Bayes) leads to posterior distributions for all parameters, including the causal estimands of interest. By Theorem \ref{thm:identifiability_shared_confounding}, these effects are identifiable given sufficient assumptions on the factor structure and interference pattern (Assumptions~\ref{asm:ppca} - \ref{asm:rank}). Even when the causal effects are not fully identified, posterior distributions will reflect appropriate uncertainty within the partial identification region implied by Proposition \ref{prop:bias_bound}. This method is applied in the simulation studies (Section~\ref{sec:simulation}) and in the PM$_{2.5}$–birthweight analysis (Section~\ref{sec:application}).

When estimating linear causal effects, we can reduce bias by first orthogonalizing the exposure and the outcome with respect to observed covariates using double machine learning (DML; \citealp{chernozhukov2018double}) and then fitting the low-rank factor model to the residuals. Nonlinear or heterogeneous effects can be accommodated either through a precomputed basis expansion or within a fully probabilistic framework. In our application, we adopt a separable specification: the exposure is standardized and a centered cubic B-spline basis is constructed outside the sampler. The outcome mean then combines a linear term in the exposure with spline-based nonlinear components, preprocessed covariates, and the bias term, while the exposure and outcome covariances are jointly modeled through the latent factor structure. For continuous exposures, we summarize posterior draws by reporting the centered dose–response curve $g(d)-g(\bar d)$ with pointwise credible bands, the marginal effect \(g'(d)=\partial g(d)/\partial d\), the average causal derivative $\mathrm{ACD} = \E[g'(D)]$, and the average treatment effect of a shift, $\mathrm{ATE}(\Delta) = \E[g(D+\Delta) - g(D)]$. These quantities are computed from analytic derivatives and spline differences over the empirical exposure distribution. 

Finally, as an alternative to the fully Bayesian approach, we also propose a non-Bayesian three-step estimator that may be more suitable for handling high-dimensional data or when more complicated nonlinear functional forms are required. Details of the algorithm and rank selection are provided in Appendix~\ref{sec:appendix_b}. 

\section{Simulation Study}
\label{sec:simulation}

In this section, we assess the performance of our proposed factor confounding (FC) model in a number of simulated examples. We begin in Section~\ref{sec:sim_spatial_invariant} by comparing the estimation error across several benchmark methods.  
% In Section \ref{sec:sim_nearest_neighbor}, we evaluate its ability to recover nearest-neighbor spillover effects.  
In Section \ref{sec:sim_misspec}, we examine the robustness of our method under model misspecification.

\subsection{Comparison of Estimators}
\label{sec:sim_spatial_invariant}

We first compare FC with alternative estimators under data generated from the proposed model \eqref{eqn:factor_y} and \eqref{eqn:exposure_factor}--\eqref{eqn:exposure_residual}, with $N=50$ locations, $T=100$ time points, and $M=3$ unmeasured confounders. For simplicity, we assume the causal effect is homogeneous and linear, $g_i(\mathbf{D}_t)=D_{it}$, with no interference, while exposures and outcomes may contain additional spatiotemporal components.

We compare seven estimators. The first three are baseline approaches that treat unmeasured confounding as a smooth function of location (and time) following \citet{gilbert2021causal}, and apply DML either (i) separately at each time point (``Single DML''), (ii) jointly across all locations (``Multi DML''), or (iii) on the entire stacked \(NT\) observations (``Stacked DML''). We then consider four additional estimators that incorporate factor modeling procedure to account for unmeasured confounding: (iv) a pure \(M\)-factor model fitted jointly to $(\mathbf D, \mathbf Y)$ (``FC(M)''); (v) a hybrid approach that first partials out spatiotemporal variation with DML before fitting the \(M\)-factor model (``FC(M)+DML''); (vi) the IFE estimator with $M$ latent factors (``IFE(M)''); and (vii) an IFE variant applied to DML residuals (``IFE(M)+DML''). 

% Finally, we evaluate (viii) a PCI estimator (``PCI''), which targets a reference location and constructs its negative control exposure and outcome from distinct other locations.

\begin{figure}[htbp]	
    \centering
    \begin{subfigure}[t]{0.48\textwidth}
        \centering		\includegraphics[width=\textwidth]{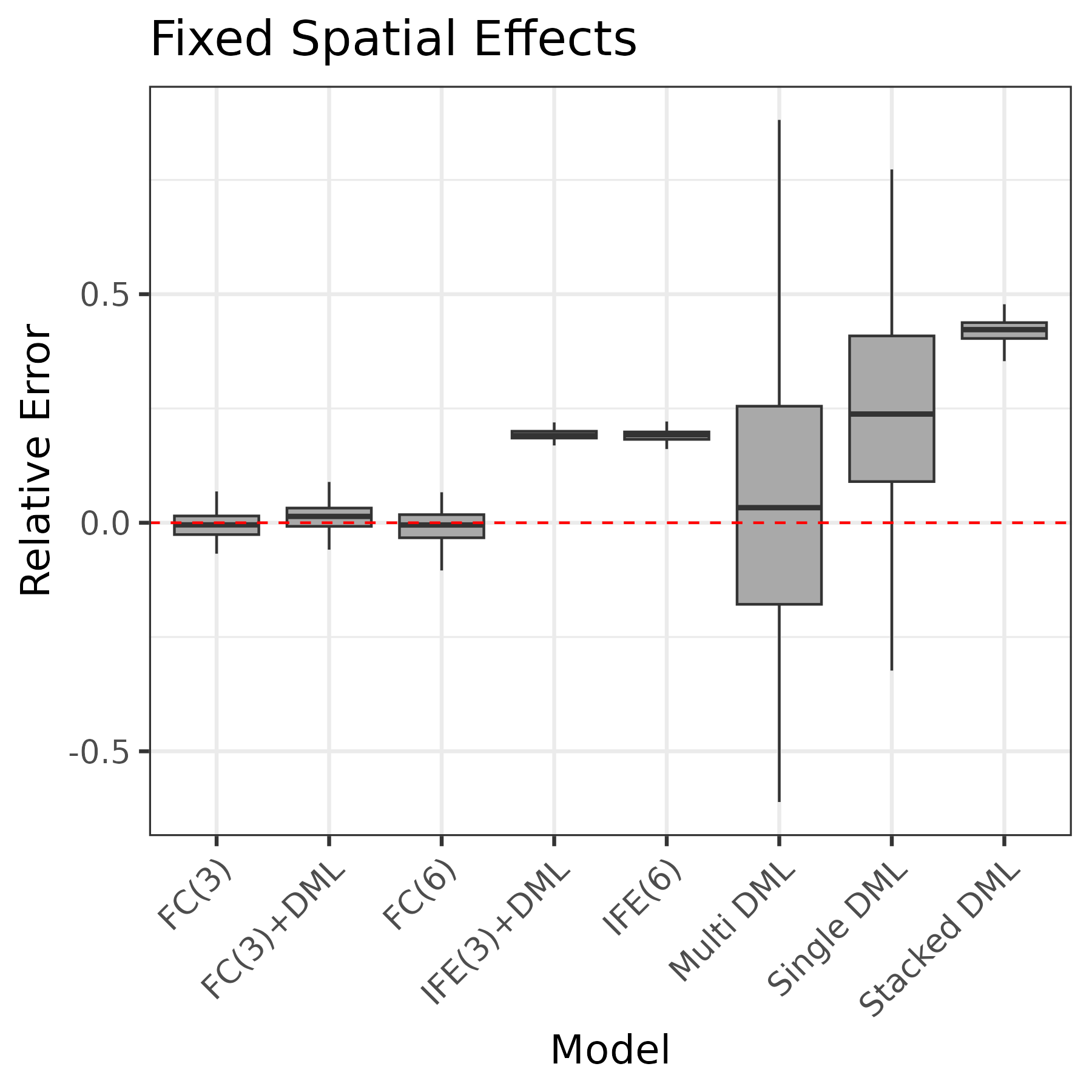}  \caption{\label{fig:linear_factor_error1}}
    \end{subfigure}
	% \hspace{.25in} %\hfill
    \begin{subfigure}[t]{0.48\textwidth}
        \centering
        \includegraphics[width=\textwidth]{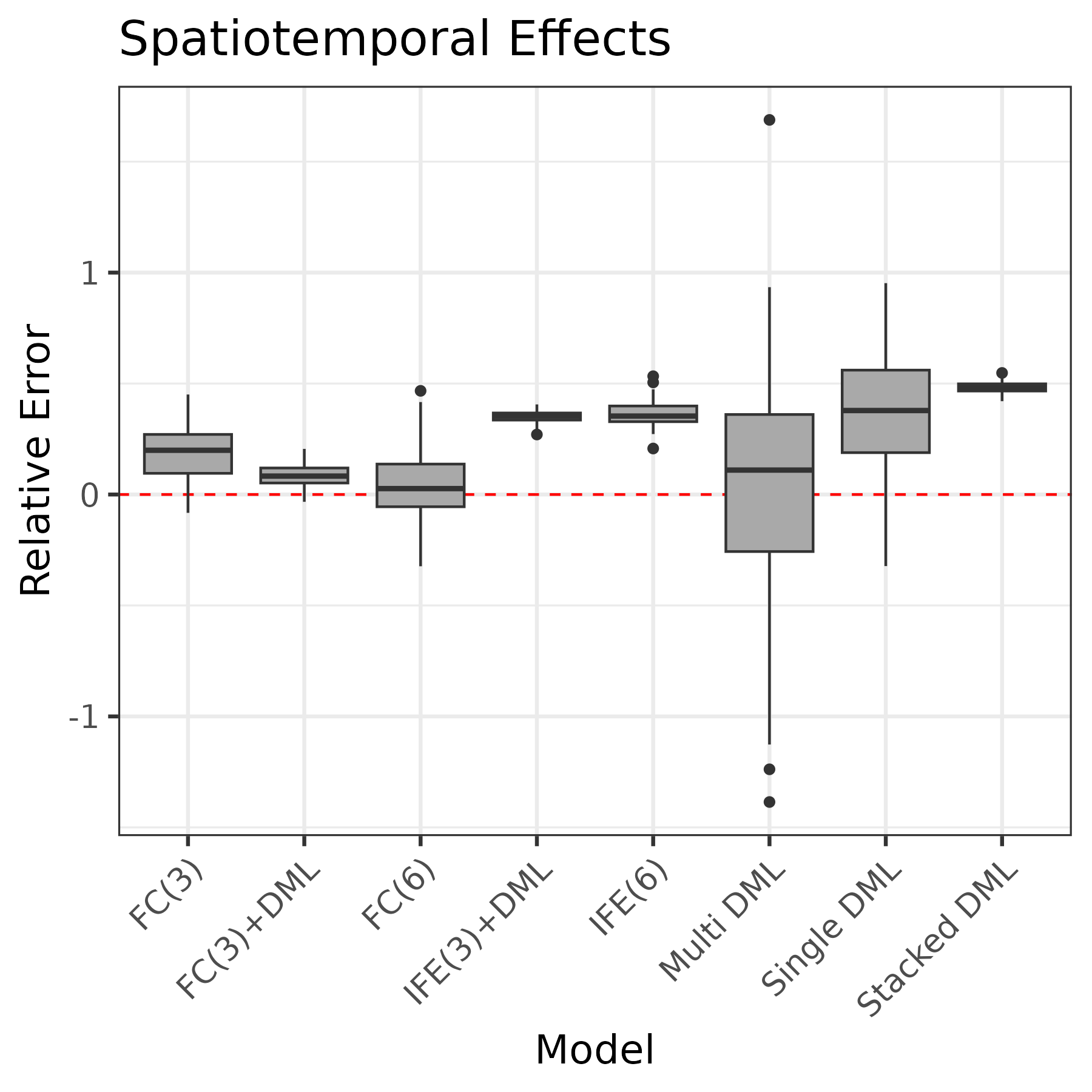}
    \caption{\label{fig:linear_factor_error2}}
    \end{subfigure}

    \caption{Relative error of estimated causal effects under (a) fixed spatial effects and (b) spatiotemporal effects.
    \label{fig:linear_factor_spatial_sim}}
\end{figure}

Figure~\ref{fig:linear_factor_spatial_sim} summarizes estimation error across $100$ simulated datasets. The naive DML variants are biased because smooth adjustment for observed location and time does not recover the residual low-rank confounding structure. The IFE estimator is suboptimal when unmeasured confounding dominates the idiosyncratic noise, or when the panel is short (see Figure~\ref{fig:sim_IFE_vs_FC} in Appendix~\ref{sec:appendix_ife_fc} for a more detailed comparison). In contrast, the hybrid factor confounding approach and the pure factor model with a sufficient number of latent factors yield the most accurate causal effect estimates. Sensitivity of FC to factor-rank selection is examined in Appendix~\ref{sec:sim_rank}, Figures~\ref{fig:sim_rank}--\ref{fig:sim_rank_sel}.

Appendix~\ref{sec:appendix_sim} provides additional simulations. Under spatial neighborhood interference, only FC recovers both the direct and spillover effects without appreciable bias. We further include an illustrative setting with heterogeneous nonlinear effects, in which FC recovers causal effects that vary across spatial units and depend nonlinearly on the exposures.

\subsection{Factor Confounding Under Model Misspecification}
\label{sec:sim_misspec}

We next investigate robustness of the FC estimator to violations of both the Gaussian exposure model in Assumption~\ref{asm:ppca} and the linear factor structure in Assumption~\ref{asm:linear_U}. To assess sensitivity to distributional misspecification while preserving the linear low-rank structure, we consider Student-$t$, Uniform, skewed normal, and discrete distributions for the latent confounders and residual errors. We also examine a setting with heteroskedastic residuals in both the exposure and outcome models. In addition, to assess sensitivity to structural misspecification, we introduce two departures from Assumption~\ref{asm:linear_U}: (i) nonlinear dependence of the exposures on the latent confounders and (ii) nonlinear confounding in the outcome model. All distributions are centered and rescaled to have unit variance so that comparisons isolate differences in distributional shape rather than scale.

Broadly, we find that when the outcome or exposure is nonlinear in the confounders, our linear method still works well as long as we inflate the ranks.  As such, we compare the performance of our approach under two different rank-selection strategies.  Our default approach is to selecgt the rank $\widehat M$ using a Bai--Ng-style information criterion applied to a reduced-form joint panel. As an alternative which is more robust to nonliearities, we also report an enriched, identification-compatible rank \(M_{\mathrm{enrich}}=\min\{2\widehat M,M_{\max}\}\), where \(M_{\max}\) is the largest rank satisfying the identification constraints. As shown in Figures~\ref{fig:sim_nongaussian} and \ref{fig:sim_nonlinear}, in these simulations FC(selected) and FC(enriched) are nearly unbiased across the non-Gaussian settings, while NUC and IFE are upwardly biased.

Under nonlinear outcome confounding, both FC variants remain approximaitely unbiased. Under nonlinear exposure confounding, especially the pure cubic transformation, FC(selected) has a small residual downward bias because the induced exposure signal is only approximately low-rank in a linear factor representation; FC(enriched) largely removes this bias with only a modest increase in dispersion.

\section{\texorpdfstring{The Causal Effect of PM$_{2.5}$ on Birth Weight}{The Causal Effect of PM2.5 on Birth Weight}}
\label{sec:application}

Birth weight is a crucial determinant of infant and long-term health, with low birth weight recognized as a significant risk factor for increased mortality and developmental complications. A recent umbrella review on air pollution and birth outcomes \citep{nyadanu2022prenatal} reinforces the growing evidence linking prenatal PM$_{2.5}$ exposure to reduced birth weight. However, estimated effects vary across studies due to differences in populations, exposure measurement, and analytical approaches. For instance, a meta-analysis by \citet{gong2022maternal} estimated that a $10$~$\mu g/m^3$ increase in PM$_{2.5}$ exposure during pregnancy leads to an average birth weight reduction of $17$~$g$ (95\% CI: [$-20$, $-13$]), with high between‑study heterogeneity (range from $-79$~$g$ to $25$~$g$). Other meta-analyses \citep{ghosh2021ambient, uwak2021application} reported greater reductions of $22$~$g$ (95\% CI: [$-32$, $-12$]) and $28$~$g$ (95\% CI: [$-48$, $-7$]) for the same exposure increment. 
% For a 10 $\mu g/m^3$ shift in daily PM$_{2.5}$ exposure, 
\citet{gilbert2021causal} 
% obtained an estimated effect of $-9.0$ $g$ (95\% CI: [$-16$, $-2$]) using a partially linear model, and an 
estimate a $-16$ $g$ (95\% CI: [$-28$, $-3$]) reduction using a nonlinear heterogeneous DML estimator assuming unmeasured confounding is a measurable function of the spatial location. These modest discrepancies highlight the challenges posed by confounding and measurement variability in existing research, underscoring the need for robust causal inference methods to more accurately quantify the impact of prenatal PM$_{2.5}$ exposure on birth outcomes.

In this study, we analyze ZIP code-level birth weight data spanning from $2018$ to $2024$, obtained from the California Vital Data (Cal-ViDa) Query Tool \citep{calvida_birth}. The public data report counts in birth weight categories rather than individual birth records, and small cells are suppressed to protect confidentiality. We therefore impute these values using a quasi-Poisson model, followed by normalization to match the reported total births within each ZIP–year stratum. We then approximate the birth weight distribution within each stratum using a normal density and estimate the mean by maximizing the empirical likelihood. Finally, ZIP codes with fewer than $11$ total births in any year are excluded from the analysis, resulting in a dataset of $1,209$ ZIP codes observed over seven years. The exposure variable, annual ZIP-level PM$_{2.5}$ concentration, is calculated from high-resolution ($0.01^\circ \times 0.01^\circ$) estimates developed by \citet{shen2024enhancing}, which integrate information from satellite, simulation, and ground-based monitoring data.

% We adjust for a range of potential confounders, including both environmental and socioeconomic variables measured at the ZIP code level. Parental demographic covariates, constructed from stratified birth count data provided by Cal-ViDa, capture the proportions of births by maternal age group, race/ethnicity, education level, country of birth, and trimester of prenatal care initiation. The dataset includes 26 covariates, organized into six groups: (1) maternal age distribution (nine categories, representing the proportion of births to mothers aged 15 years and under through 50 years and over), (2) race/ethnicity composition (seven categories: White, Black/African-American, Native American/Alaskan Native, Asian, Pacific Islander/Native Hawaiian, Hispanic, and Multi-Race), (3) educational attainment (four levels, from no high school diploma to graduate degree completion), (4) nativity (percentage of foreign-born mothers), (5) prenatal care timing (percentage of pregnancies initiating care in each trimester or receiving no prenatal care), and (6) annual median household income obtained from \citet{manson2024ipums}. 
% Geographic coordinates (longitude and latitude) and calendar year are also included as adjustment variables. 

We adjust for $26$ potential confounders, including both environmental and socioeconomic variables measured at the ZIP code level. Parental demographic covariates, constructed from stratified birth count data provided by Cal-ViDa, capture the proportions of births by maternal age, race/ethnicity, education level, nativity, and trimester of prenatal care initiation. The annual median household income is obtained from \citet{manson2024ipums}. Geographic coordinates and calendar year are retained in all specifications so that broad spatial and temporal structure is modeled before the residual factor adjustment is applied. These covariates capture important demographic composition and healthcare access patterns across ZIP codes, but may not fully account for all sources of confounding in the relationship between PM$_{2.5}$ and birth weight. Residual confounding may persist due to additional unmeasured factors such as overall maternal health conditions, health-related behaviors, and parity. Regional factors including industrial activity, healthcare accessibility, and environmental conditions (e.g., green space, temperature, humidity), some of which remain stable over the study period, may also contribute to unmeasured confounding.

In this analysis, we start by modeling the causal effect using a variant of generalized additive models that accommodates nonlinear PM$_{2.5}$ effects while adjusting for factor confounding (see Section~\ref{sec:estimation}). In contrast to the simulations in Section \ref{sec:simulation}, given the high density of spatial locations and small number of yearly observations, we treat ZIP codes as independent observations and apply the factor model to capture residual temporal correlation instead.  For the factor model, we fix the number of latent factors at $M=3$, reflecting two considerations: (i) identification under the factor confounding assumption with seven years of data requires at most rank~3; and (ii) in practice, modestly overestimating the number of latent factors rarely degrades performance. 

\begin{figure}[h]
  \centering
  \includegraphics[width=\textwidth]{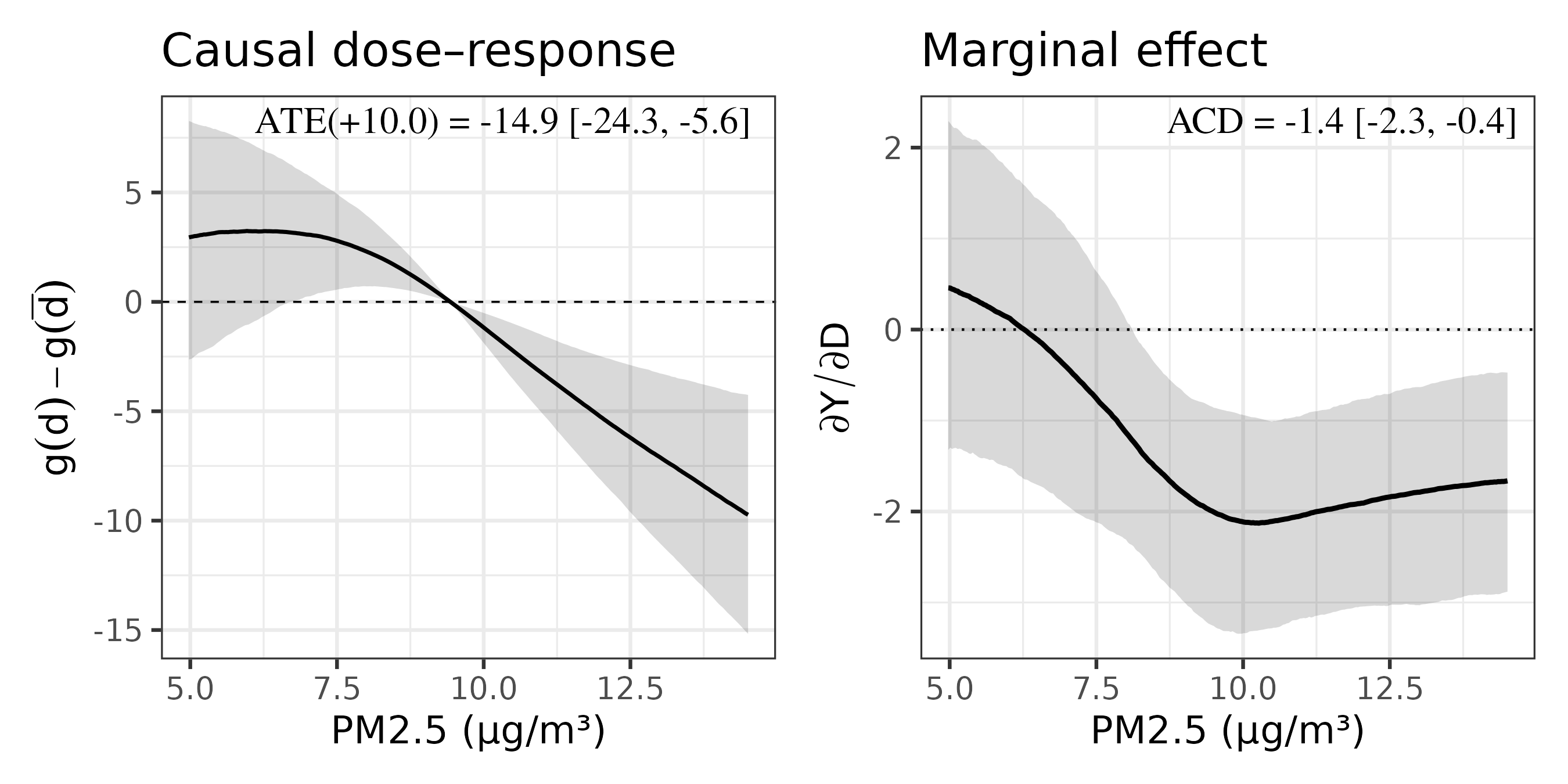}
    \caption{Estimated causal dose-response curve and marginal causal effect under factor confounding.  Left) The centered causal dose-response function.  On average, a $10$ $\mu g / m^3$ increase in PM$_{2.5}$ exposure corresponds to a $14.9$ $g$ decrease in birth weight. Right) The marginal effect of PM$_{2.5}$ on birth weight, as a function of exposure. The average causal derivative is about $-1.4$.  Both plots indicate little evidence of an effect at the lowest exposure levels, but a reduction of about 2~$g$ in birth weight per unit increase in PM$_{2.5}$ above the median exposure level.}
  \label{fig:bw_nonlinear}
\end{figure}

Figure \ref{fig:bw_nonlinear} shows the inferred dose-response function (left) and marginal effect (right) of \pmtf on birth weight after the factor confounding adjustment.  We estimate the average treatment effect for a $10$ $\mu g /m^3$ change in \pmtf as $\E_D[g(D + 10) - g(D)] = -14.9$ $g$ with a 95\% credible interval of $(-24.3, -5.6)$.  The average causal derivative is $\E_D[g'(D)]=-1.4$ with a 95\% credible interval of $(-2.3, -0.4)$.  Both of these estimates are broadly in agreement with previously published estimates.  In general, we find that below the population median \pmtf level of $9.3$ $\mu g/m^3$, there is little evidence that increases in the exposure cause changes in birth weight. However, above the median, a unit increase in \pmtf causes an average reduction of approximately 2 $g$.

\begin{figure}[ht]
  \centering
    \includegraphics[width=\textwidth]{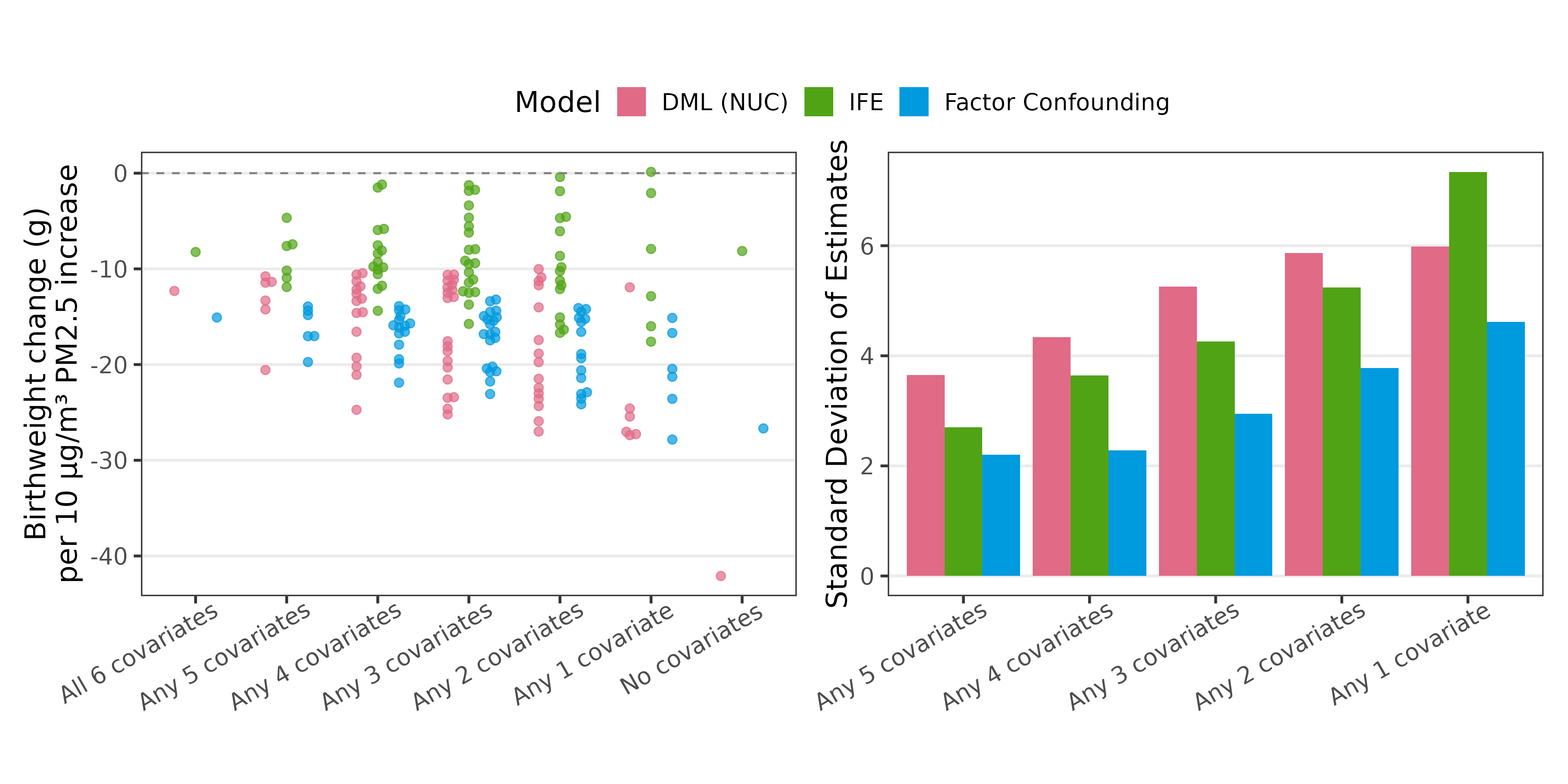}
    \caption{Left) Comparison of effect estimates when controlling for all $\binom{6}{k}$ subsets of confounders, using standard DML, DML combined with the IFE model, and DML combined with factor confounding. Right) Standard deviations of estimates with different number of covariates excluded.  In all cases, factor confounding estimator exhibits the least variability under the exclusion of potential confounders, suggesting greater robustness to omitted variables than either IFE or the standard spatial confounding approaches.}
  \label{fig:bw_combined}
\end{figure}

To demonstrate the robustness of the factor confounding approach on this dataset, we next compare the causal effect estimates when explicitly leaving out potentially important measured confounders (Figure~\ref{fig:bw_combined}). By leaving out different subsets of covariates, we can assess how sensitive each estimator is to the presence of unmeasured confounding. We always adjust for spatial location and year, while varying the exclusion of the six remaining covariate groups. We compare three different estimators: DML under the NUC assumption, the IFE model, and our factor confounding method. For comparability with IFE, the factor confounding adjustment is applied within a partially linear specification in which the outcome is modeled as linear in the exposure. In the left panel of Figure \ref{fig:bw_combined}, we plot the causal effect estimates, grouped by the number of included covariate groups and colored by the estimator. Unsurprisingly, variation in estimates is greatest when the fewest covariates are included. In contrast, each estimator yields comparable estimates when all covariates are included, presumably because we reduce the residual correlation by conditioning on more potential confounders.  The right panel of Figure \ref{fig:bw_combined} shows the standard deviation of estimates by estimator and the number of included covariates, as another way to visualize the sensitivity of each estimator to potential unmeasured confounding. DML (NUC), which does not incorporate any adjustment for unmeasured confounding exhibits substantial sensitivity to the adjustment set, exceeding the variability of both factor confounding and IFE estimates in almost all settings. IFE exhibits the second highest variability in estimates in nearly all settings, except when only one covariate group is included, where it shows the greatest variability.  In contrast, the factor confounding model consistently yields the most stable estimates, showing the least sensitivity to the choice of included covariates relative to IFE and DML (NUC).

Further, although the true causal effect is unknown, estimates from the factor confounding approach align more closely with previously published results. In Appendix Figure~\ref{fig:sensitivity_rmse}, we plot the RMSE of the estimates against the hypothetical causal effect across all $\binom{6}{k}$ covariate subsets, colored by model and faceted by $k$.  Factor confounding consistently achieves the lowest RMSE for nearly all values of $k$ when the true causal effect is about $-15$ $g$ or smaller, whereas  IFE tends to perform better when the effect exceeds $-15$ $g$.  Prior meta-analyses and causal studies suggest effects in the range of $-16$ $g$ to $-28$ $g$ \citep{ghosh2021ambient, gilbert2021causal, uwak2021application, gong2022maternal}, supporting the plausibility of the factor confounding estimates.

Finally, we conduct an additional sensitivity analysis to assess whether the estimated birth weight effects are influenced by spillovers from nearby areas or prior periods. Specifically, we augment the analysis with two additional exposure measures: the average PM$_{2.5}$ level among the five nearest neighboring ZIP codes and prior-year PM$_{2.5}$ exposure. The estimated spillover effects are small and not statistically significant across all methods (Appendix Figure~\ref{fig:bw_interference}). However, for the direct effect of local exposure, only FC remains stable and statistically significant, whereas DML (NUC) and IFE fail to recover the signal.

% add other estimates to this interval figure
% two-sensitivity analysis (fixed based on benchmark) on naive estimate(no factor model) 
% add individual confounding make a plot of effect 
% two sensitivity parameters, one sensitivity parameter, 
% a figure about more assumption more precise result
% 1. weight data by birth count if assuming homoskedastic individual-level errors, can also weight by inverse variance; estimate higher for large zipcodes
% 2. Nonlinear in D

\section{Discussion}

% In this work, we introduced a model for spatiotemporal causal inference with unmeasured confounding.  We specifically focus on data with continuous exposures and shared unmeasured confounders.  To account for bias from such confounders, we propose a model motivated by ideas from the causal panel data and multivariate causal inference literature.  In particular, when the outcomes and exposures are linear in the unmeasured confounders, a factor modeling framework can be applied to identify causal effects, even in the presence of treatment effect heterogeneity, nonlinearity, or interference. We show that the causal effects are partially identified under our so-called ``factor confounding'' assumption, and demonstrate how additional (typically weak) assumptions which limit the amount of neighborhood interference can be used to point identify causal effects.  

In this work, we introduced a factor confounding framework for spatiotemporal causal inference with continuous exposures, shared unmeasured confounders, and possible interference. Building on ideas from causal panel data and multivariate causal inference, the proposed model uses low-rank factor structure to characterize bias from shared confounders. When exposures and outcomes are linear in the unmeasured confounders, this structure partially identifies causal effects while allowing treatment effect heterogeneity, nonlinear exposure effects, and interference. We further show that additional (typically weak) restrictions on neighborhood interference can yield point identification.

There are several extensions worth considering. Our theoretical results rest on several assumptions such as linearity in the unmeasured confounders, Gaussianity of residuals, and the absence of treatment–confounder feedback \citep{hernan2020causal}.  We show in simulations that the results are not sensitive to the Gaussian and linear confounding assumptions, but leave any formal extensions involving non-Gaussian or nonlinear relationships for future work. Methods which incorporate our approach into estimation procedures in the presence of treatment-confounder feedback are also warranted. Indeed, in the birth weight application (Section \ref{sec:application}), the observed covariates do not perfectly satisfy the factor confounding assumption, which explains why estimates vary when subsets of confounders are systematically held out.  To address such concerns, our approach could be combined with additional sensitivity analyses \citep[e.g. in the spirit of][]{cinelli2020making}. Developing sensitivity analyses tailored to spatiotemporal causal settings remains an important direction for future research.

% Finally, we note that our approach is built on recent work in multivariate causal inference \citep{zheng2022bayesian, zheng2024sensitivity, kang2023partial}. We used methods applied to account for correlation between exposures and outcomes within-unit and applied it to modeling between-unit dependence.  Some applications may benefit from latent variable models which account for both within and between-unit correlations, e.g. panel data with multiple outcomes measured for each unit.  In this case, models which account for the natural tensor structure of the data may be useful in improving estimates of causal inference under appropriate assumptions about unmeasured confounding.  

\section*{Code Availability}

The code used to implement the proposed methods and reproduce the simulation studies and birth weight application is available at
\url{https://github.com/jw1212/spatiotemporal-factor-causal}.

\bibliography{references,references_spatial}

\clearpage

\appendix
\section{Proofs}
\subsection*{Proof of Theorem~\ref{thm:identifiability_shared_confounding}}

% \noindent \textbf{Lemma 1.}\begin{itshape}
% Under Assumption \ref{asm:rank}, the set of outer products

% $ 
% S := \bigl\{\gamma_i r_j^{\top}: j\notin \mathcal N_i^{\text{spatial}}\bigr\} \subseteq \mathbb R^{M\times M}
% $
% spans the full matrix space, i.e. $\operatorname{span}(S)=\mathbb R^{M\times M}$.
% \end{itshape}

% \begin{proof}[Proof of Lemma~\ref{lem:spanning}]
% This follows directly from Assumption~\ref{asm:rank}, which ensures the existence of $M$ linearly independent rows of $\Gamma$ and sufficient rank in off-neighbourhood columns of $R$ to generate a basis of $\mathbb R^{M \times M}$.
% \end{proof}

% \noindent \textbf{Theorem 1.}\begin{itshape}
% Assume model~\eqref{eqn:factor_d}-\eqref{eqn:factor_y}, Assumptions~\ref{asm:factor} and \ref{asm:rank}. Define the bias matrix as
% \[
%      C \;:=\; \Gamma\Sigma_{U^{\text{time}}\mid D}^{-1/2}B^{\top}\Sigma_{D}^{-1} \in\mathbb{R}^{N\times N}.
% \]
% % \vspace{0.5em}
% \noindent 
% Then $C$ is point-identified from the second‐order moments of $(\mathbf D_t,\mathbf Y_t)$. The loading pair is identified only up to a common orthogonal rotation: $(B,\Gamma) \sim (B\Theta,\Gamma\Theta)$, $\Theta\in\mathcal O_M$. Each causal function is nonparametrically identified through
% \[
%   g_i(d)
%   \;=\;\E\left[Y_{it}-( C\mathbf D_t)_i\mid D_{it}=d\right],
%   \; d\in\operatorname{supp}(D_{it}),\;
%   i=1,\dots,N.
% \]
% \end{itshape}

\begin{proof}
\textbf{Step 1 (Characterization of bias).}
% drop the subscript $t$  U^{\text{time}}_t
Without loss of generality, we assume $\E[\mathbf D_t]=0$. Joint Gaussianity of $(\mathbf U_t,\mathbf D_t)$ implies
\[
\E[\mathbf U_t\mid\mathbf D_t]
   =B^{\top}\Sigma_{D}^{-1}\mathbf D_t,
\quad
\Sigma_{U\mid D}:=\Cov(\mathbf U_t\mid\mathbf{D}_t) = I_M-B^{\top}\Sigma_{D}^{-1}B \succ0.
\]
Substituting these into Equation \eqref{eqn:factor_y} gives the conditional mean 
\[
\E[\mathbf Y_t\mid\mathbf D_t] =g(\mathbf D_t)+C\mathbf D_t,
\]
where $C = \Gamma\Sigma_{U\mid D}^{-1/2}B^{\top}\Sigma_{D}^{-1}$ controls the confounding bias.

Since $B$ and $\Gamma$ are identified up to independent rotations under Assumption~\ref{asm:factor}, for any rotations $\tilde{B}=B\Theta_1,\tilde{\Gamma}=\Gamma\Theta_2, \Theta_1, \Theta_2 \in \mathcal{O}_M$, the bias can be written as 
\[
\tilde{C} = \tilde{\Gamma}\tilde{\Sigma}_{U\mid D}^{-1/2}\tilde{B}^{\top}\Sigma_D^{-1} = \Gamma\Theta\Sigma_{U\mid D}^{-1/2}B^{\top}\Sigma_D^{-1},
\] 
where $\Theta := \Theta_2\Theta_1^\top$ is still an orthogonal matrix. Thus we can fix $B$ and $\Gamma$, then $\Theta$ remains the only degree of freedom in bias. Note that $C$ is invariant if $\Theta_1=\Theta_2$, so $B$ and $\Gamma$ are at most identified up to a common rotation from the right.

\textbf{Step 2 (Off-neighborhood identification of $C$).}
Under partial interference, $g_i(\cdot)$ depends on the exposure vector only through its neighborhood $\mathcal N_i$, so $C_{ij}$ is identified from 
\[
\partial_{d_j}\E[Y_{it}\mid\mathbf D_t=\mathbf d] =C_{ij}, \quad \forall i \in \mathcal{I},\, j\notin \mathcal N_i.
\]
To resolve the rotation indeterminancy and identify the full matrix $C$, it suffices to show that the mapping from $\Theta$ to these identified entries of $C$ is injective. 

\textbf{Step 3 (Injectivity w.r.t.\ rotations).}
For any $\Theta\in\mathcal O_M$, write
\[
C(\Theta)=\Gamma\Theta R, \quad C_{ij} = \gamma_i^\top\Theta r_j. 
\]
Consider two rotation matrices $\Theta$ and $\Theta'$. The off-neighborhood identification of $C$ gives $C_{ij}(\Theta)=C_{ij}(\Theta')$, $\forall j\notin \mathcal N_i$, hence
\[
[\Gamma(\Theta-\Theta')R]_{ij} = \gamma_i^\top(\Theta-\Theta')r_j = \langle\Theta-\Theta', \gamma_i r_j^{\top}\rangle_F=0,
\]
where $\langle\cdot,\cdot\rangle_F$ is the Frobenius inner product. Define the set of outer products 
\[S := \bigl\{\gamma_i r_j^{\top}: j\notin \mathcal N_i\bigr\} \subset\mathbb R^{M\times M}.
\]
By Assumption~\ref{asm:rank}, there exists $M$ linearly independent rows of $\Gamma$ and sufficient rank in off-neighborhood columns of $R$ to generate a basis of $\mathbb R^{M \times M}$, which implies $\operatorname{span}(S)=\mathbb R^{M\times M}$. Therefore, it follows that $\Theta=\Theta'$ and the mapping $\Theta \mapsto C(\Theta)$ is injective.

% \textbf{Step 4 (S spans $\mathbb R^{M \times M}$).}

% Because $N>M$ and both $A$ and $\Gamma$ are full column-rank, for every index $j\in\{1,\dots ,N\}$, there exist coefficients $\{\alpha_i\}_{i=1}^N$, with $\alpha_j\neq 0$ and $\sum_{i=1}^N\alpha_i\gamma_i=\mathbf 0$. Multiplying on the right by $r_j^\top$ gives
% \[
% \gamma_j r_j^{\top}
%    =-\sum_{i\neq j}\frac{\alpha_i}{\alpha_j}
%      \gamma_i r_j^{\top}\in\operatorname{span}(S).
% \]
% Hence every diagonal outer product $\gamma_j r_j^{\top}$ belongs to the span of $S$. Since the full collection $\{\gamma_j r_k^{\top}:1\le j,k\le N\}$ spans $\mathbb R^{M\times M}$, the off-diagonal subset $S$ alone already spans the whole space.

\textbf{Step 4 (Identification of $g$).}
For each $i$, taking the conditional expectation with respect to $D_{it}$ yields
\[
\E\left[Y_{it}-(C\mathbf D_t)_i\mid D_{it}\right]=g_i(D_{it}),
\]
which is identified once $C$ is known. Non-parametric estimation of $g_i$ requires the usual smoothness and support conditions, which are assumed satisfied. All claims are proved.

\end{proof}

\clearpage

\section{Estimation Details}
\label{sec:appendix_b}

Algorithm \ref{alg:factor_no_spillover} outlines the three-step estimator for causal inference under factor confounding. For uncertainty quantification, we augment this procedure with an additional bootstrap step.

\begin{algorithm}[h]
\caption{Three-step estimator under factor confounding}
\label{alg:factor_no_spillover}
%----------------------------------------
\textbf{Input:} i.i.d.\ observations $\{(\mathbf Y_t,\mathbf D_t, \mathbf X_t)\}_{t=1}^T$, number of latent factors $M$, bootstrap replicates $B$, number of random initializations $N_{\text{init}}$. \\
\textbf{Output:} bias matrix $\widehat C$, dose-response functions $\{\widehat g_i\}_{i=1}^{N}$, causal estimand(s) $\widehat\beta$, and corresponding confidence regions.

\textbf{Step I: Off–neighborhood entries of $C$.} \\ 
\quad For each $i=1,\dots,N$, fit the partially linear model
\[
Y_{it} \;=\; m_i(D_{it},\mathbf X_t) +\sum_{j\notin \mathcal N_i}\alpha_{ij}D_{jt} + \xi_{it},
\]
where $m_i(\cdot)$ is learned nonparametrically. Set $\widehat C_{ij}\leftarrow\widehat\alpha_{ij}$ for $j\notin \mathcal N_i$ and denote the resulting off–neighborhood matrix by $\widehat C_{\mathrm{off}}$.

\textbf{Step II: Neighborhood entries of $C$ via factor analysis.}
\begin{enumerate}[itemsep=0.25em, topsep=0.25em]
    \item Regress each exposure $D_{it}$ on $\mathbf X_t$, collect residuals $\widetilde{\mathbf D}_{t}$, and fit an $M$‑factor model to obtain loadings $\widehat B$.
    \item Regress each outcome $Y_{it}$ on $(D_{it},\mathbf X_t)$, collect residuals $\widetilde{\mathbf Y}_t$, and fit an $M$‑factor model to obtain loadings $\widehat\Gamma$.
    \item Compute $\widehat R=\widehat\Sigma_{V\mid D,X}^{-1/2}\, \widehat B^{\top}\widehat\Sigma_D^{-1}$ and solve the generalized orthogonal Procrustes problem
    \[
    \widehat\Theta =\arg\min_{\Theta\in\mathcal O_M} \bigl\| (\widehat\Gamma\Theta\widehat R)_{\mathrm{off}} -\widehat C_{\mathrm{off}} \bigr\|_F^{2}.
    \]
    Repeat the optimization $N_{\text{init}}$ times with independent initializations $\widehat{\Theta}_0$ and retain the optimal solution. Construct $\widehat C=\widehat\Gamma\widehat\Theta\widehat R$ and overwrite its off-neighborhood with $\hat{C}_{\mathrm{off}}$.
\end{enumerate}

\textbf{Step III: Debiasing and estimation of $\{g\}$.}

Obtain the debiased outcomes $Y_{it}^{\dagger} =Y_{it}-\sum_{j=1}^{N}\widehat C_{ij}\widetilde{D}_{jt}$. For each $i=1,\dots,N$, fit a model $\widehat g_i(\cdot)$ of $Y_{it}^{\dagger}$ on $(D_{it},\mathbf X_t)$ and compute $\widehat\beta$.

\textbf{Step IV: Bootstrap inference.} \\
Draw $B$ bootstrap samples. For each sample $b=1,\dots,B$, repeat \textbf{Steps I–III} to obtain $\{\widehat C^{*(b)},\widehat g^{*(b)}, \beta^{*(b)} \}_{b=1}^B$ and form confidence regions.

\end{algorithm}

The number of latent factors is a tuning parameter in practice and can be selected using one or more of the following tools:
\begin{itemize}
  \item \textbf{Exploratory diagnostics.} Scree plots, eigenvalue-ratio tests, or parallel analysis can help identify the point beyond which additional factors explain little incremental variation.
  \item \textbf{Information criteria.} Classical criteria such as AIC or BIC, as well as the factor-selection criteria of \citet{bai2002determining}, can be used to balance model fit and complexity.
  \item \textbf{Stability checks.} One may examine how estimated factor loadings and causal effect estimates vary with the fitted rank, and retain the smallest rank at which these quantities stabilize.
  \item \textbf{Bayesian comparison.} Candidate Bayesian factor models satisfying the factor-confounding structure may be compared using Pareto-smoothed importance sampling leave-one-out predictive loss (PSIS--LOO; \citealp{vehtari2017practical}).
\end{itemize}

In practice, we recommend combining these diagnostics with substantive judgment and sensitivity analysis. Our simulations suggest that moderate over-specification of the factor rank typically has little effect on bias, provided the identification conditions remain satisfied, although it may increase variance somewhat. By contrast, under-specification can induce appreciable bias when the fitted rank is too small to capture the latent confounding structure. In the misspecification simulations below, we therefore report both a parsimonious data-driven rank, denoted \(M=\widehat M\), and an enriched rank \(M_{\mathrm{enrich}}=\min\{2\widehat M,M_{\max}\}\), where \(M_{\max}\) is the largest rank compatible with the relevant identification constraints. The enriched fit is intended as a sensitivity analysis for approximate low-rank structure: it allows additional latent directions when the selected rank is too parsimonious, while avoiding ranks outside the identified region.

% One problem is to determine the number of factors to fit. First, visual scree plot and eigenvalue ratio checks flag the break point where additional factors explain only marginal variation. Second, information criteria such as BIC or AIC to balance model fit and complexity. Third, to ensure robustness, we can inspect how key loadings and causal effect estimates change as one adds or removes a factor; the common choice is the smallest rank at which those quantities become stable. In the Bayesian paradigm, for each candidate rank that satisfies the factor confounding assumption, we can fit a full Bayesian factor model and compute the Pareto-smoothed importance sampling estimate of leave-one-out predictive loss \citep{vehtari2017practical} to evaluate model fit.

%--------------------------------------------------------

\clearpage

\section{Additional Simulation Results}
\label{sec:appendix_sim}

\subsection*{Metrics for Section~\ref{sec:sim_spatial_invariant}}
% \subsection{Shared Confounding across Time}

% \begin{figure}[htbp]	
%     \centering
%     \begin{subfigure}[t]{0.48\textwidth}
%         \centering		\includegraphics[width=\textwidth]{}    \caption{\label{fig:linear_factor_sp_error}}
%     \end{subfigure}
% 	% \hspace{.25in} %\hfill
%     \begin{subfigure}[t]{0.48\textwidth}
%         \centering
%         \includegraphics[width=\textwidth]{}
%     \caption{\label{fig:spatial_factor_st_error}}
%     \end{subfigure}

%     \caption{Relative error of estimated causal effects under (a) constant spatial effects with time-invariant confounding and (b) spatiotemporal effects with time-invariant confounding.
%     \label{fig:linear_factor_time_sim}}
% \end{figure}

% Detailed metrics for Figures~\ref{fig:linear_factor_spatial_sim} and \ref{fig:linear_factor_time_sim} are shown in Table~\ref{table:sim_linear}.

\begin{table}[htbp]
\centering
\caption{Bias, standard deviation, and mean squared error for the estimators considered in the homogeneous linear effect setting with factor confounding.} 
\label{table:sim_linear}
\begin{tabular}{llrrr}
\toprule
\textbf{Setting} & \textbf{Method} & \textbf{Bias} & \textbf{SD} & \textbf{MSE} \\
\midrule
Fixed spatial effects 
& Single DML & 0.240 & 0.209 & 0.100 \\ 
& Multi DML & 0.011 & 0.386 & 0.146 \\
& Stacked DML & 0.415 & 0.026 & 0.173 \\
& \textbf{FC(3)} & \textbf{-0.005} & 0.030 & \textbf{0.0009} \\
& FC(6) & -0.012 & 0.038 & 0.0016 \\
& FC(3)+DML & 0.012 & 0.031 & 0.0011 \\
& IFE(6) & 0.193 & 0.014 & 0.0374 \\
& IFE(3)+DML & 0.194 & 0.013 & 0.0378 \\
% & PCI & 0.356 & 0.193 & 0.163 \\
\addlinespace
% \hline
Spatiotemporal effects
% + Spatially\\Invariant Confounding
& Single DML & 0.336 & 0.290 & 0.195 \\ 
& Multi DML & 0.053 & 0.515 & 0.260 \\
& Stacked DML & 0.482 & 0.027 & 0.233 \\
& FC(3) & 0.174 & 0.131 & 0.047 \\
& \textbf{FC(6)} & \textbf{0.032} & 0.092 & 0.010 \\
& \textbf{FC(3)+DML} & 0.077 & 0.048 & \textbf{0.008} \\
& IFE(6) & 0.362 & 0.056 & 0.134 \\
& IFE(3)+DML & 0.345 & 0.023 & 0.120 \\
% & PCI & 0.535 & 0.639 & 0.687 \\
% \addlinespace
% Spatial + Time\\Invariant Confounding
% & Single DML & -0.026 & 0.090 & 0.009 \\ 
% & Multi DML & 0.122 & 0.103 & 0.025 \\
% & Stacked DML & 0.381 & 0.015 & 0.145 \\
% & FC(3) & 0.397 & 0.052 & 0.161 \\
% & \textbf{FC(6)} & \textbf{-0.009} & 0.046 & \textbf{0.002} \\
% & FC(3)+DML & 0.042 & 0.042 & 0.004 \\
% & IFE(6) & 0.197 & 0.079 & 0.045 \\
% & IFE(3)+DML & 0.248 & 0.042 & 0.063 \\
% & PCI & 0.425 & 0.031 & 0.182 \\
% \addlinespace
% Spatiotemporal + Time\\Invariant Confounding
% & Single DML & 0.209 & 0.062 & 0.048 \\ 
% & Multi DML & 0.104 & 0.069 & 0.016 \\
% & Stacked DML & 0.392 & 0.014 & 0.154 \\
% & FC(3) & 0.255 & 0.303 & 0.154 \\
% & FC(6) & 0.051 & 0.092 & 0.011 \\
% & \textbf{FC(3)+DML} & \textbf{0.033} & 0.027 & \textbf{0.002} \\
% & IFE(6) & 0.243 & 0.077 & 0.065 \\
% & IFE(3)+DML & 0.254 & 0.057 & 0.067 \\
% & PCI & 0.413 & 0.032 & 0.171 \\
\bottomrule
\end{tabular}
\end{table}

\subsection*{Comparison by Confounding Strength and Panel Length}
\label{sec:appendix_ife_fc}

For further comparison between DML (NUC), IFE and FC, we simulate panel data with latent factor confounding. Exposure and outcome are both driven by a small number of unobserved factors, plus idiosyncratic noise, with a constant causal effect $\beta=1$. To control confounding severity, we draw the factor loadings that link factors to exposure and outcome from a bivariate normal with correlation $\rho$, so larger $\rho$ implies stronger alignment between the two loading vectors and hence more severe latent confounding. We fix the number of units at $N=50$ and the factor dimension at $M=3$, and vary the panel length $T\in\{10,50,200\}$ to probe different $N/T$ regimes. Confounding strength varies over $\rho\in\{0,0.5,0.9\}$. Each $(\rho,T)$ configuration is replicated $50$ times. All methods are given the correct factor rank in these simulations. 

Figure~\ref{fig:sim_IFE_vs_FC} reports the sampling distributions of \(\widehat{\beta}\), faceted by confounding strength \(\rho\) and panel length \(T\). With weak confounding ($\rho=0$), all estimators concentrate near the truth, and dispersion shrinks as $T$ increases. At moderate confounding ($\rho=0.5$), the DML (NUC) baseline shows clear bias; IFE improves with longer panels as factor recovery stabilizes; and the joint FC model remains close to $\beta$ with smaller spread. Under severe confounding ($\rho=0.9$), DML (NUC) is substantially biased and can dominate the axis scale; IFE moves toward $\beta$ as $T$ grows but remains sensitive when $T$ is small; FC is the most robust overall, with medians near $\beta$ across $T$ and reduced dispersion. These patterns align with the identification logic: when exposure and outcome share strong low-rank structure, explicit factor adjustment is essential, and pooling information from both sides improves recovery of the confounder space, particularly in shorter panels.

\begin{figure}[htbp]
\centering
\includegraphics[width=0.8\linewidth]{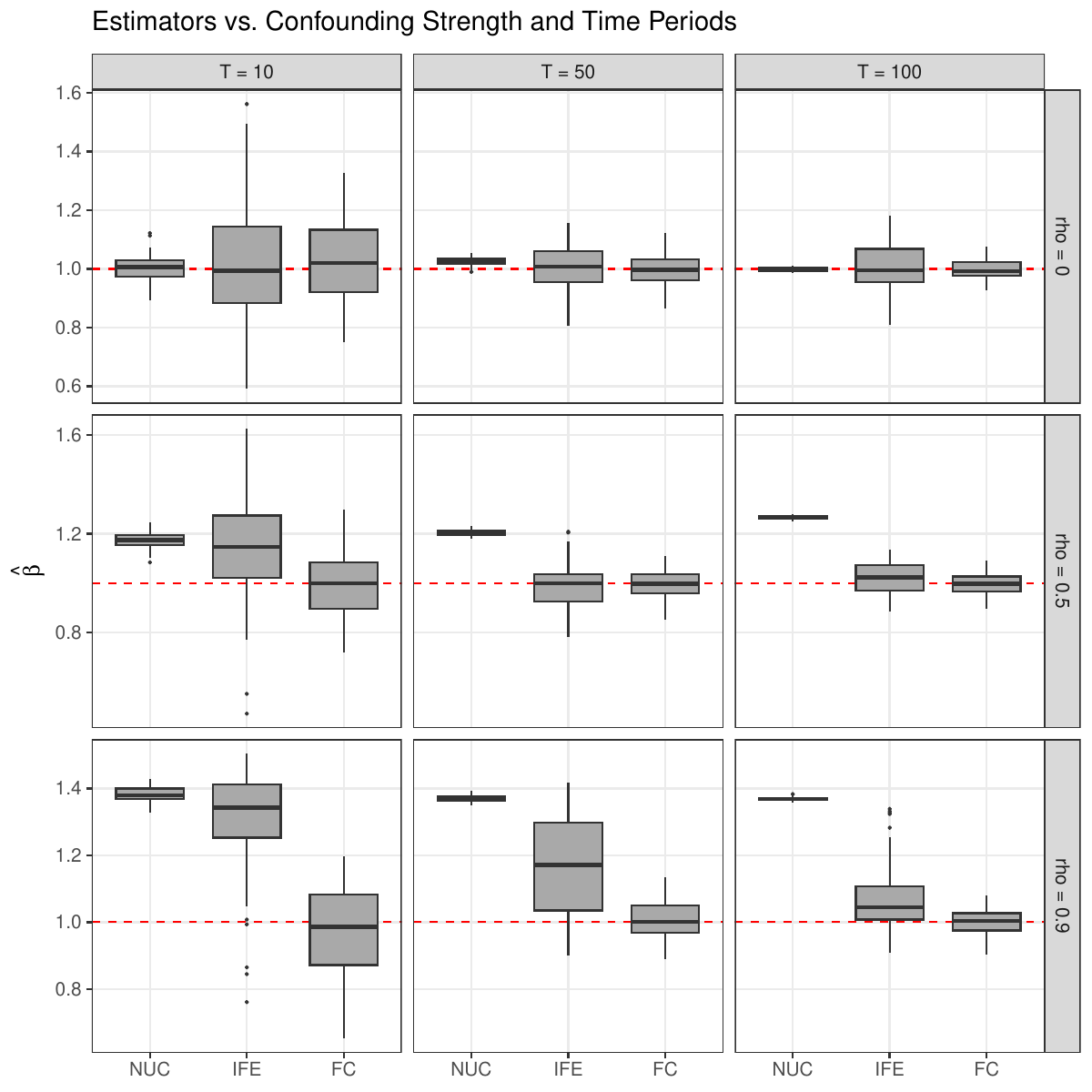}
\caption{Sampling distributions of $\widehat\beta$ across replications, faceted by confounding strength $\rho$ (rows) and panel length $T$ (columns). The dashed line marks the true effect $\beta=1$.}
\label{fig:sim_IFE_vs_FC}
\end{figure}

% \clearpage

\subsection*{Sensitivity to Factor-Rank Specification}
\label{sec:sim_rank}

To examine sensitivity to factor-rank specification within the correctly specified factor confounding model, we simulate panel data under spatially invariant confounding with true factor rank \(M \in \{3,6,10\}\), fixing \(N=30\) and \(T=100\). We compare FC fitted with fixed ranks $3$ and $6$, together with a data-driven fit FC(selected), where the factor rank is chosen from a prespecified candidate set using a Bai--Ng-style information criterion \citep{bai2002determining}, applied to a reduced-form joint panel. Specifically, we first form \(W=[D,\;Y-\tilde\beta D]\), where \(\tilde\beta\) is the pooled least-squares slope from regressing \(Y\) on \(D\), and then select the rank by minimizing a penalized residual variance criterion over the candidate set. Figure~\ref{fig:sim_rank} reports the sampling distributions of \(\widehat\beta - \beta\), and Figure~\ref{fig:sim_rank_sel} summarizes the empirical distribution of the selected ranks. This simulation evaluates the behavior of the selector itself, while the subsequent misspecification simulations also include the enriched rank \(M_{\mathrm{enrich}}\) to examine whether a modest identification-compatible expansion improves robustness when the observed data is only approximately linear low-rank.

\begin{figure}[htbp]
\centering
\includegraphics[width=0.8\linewidth]{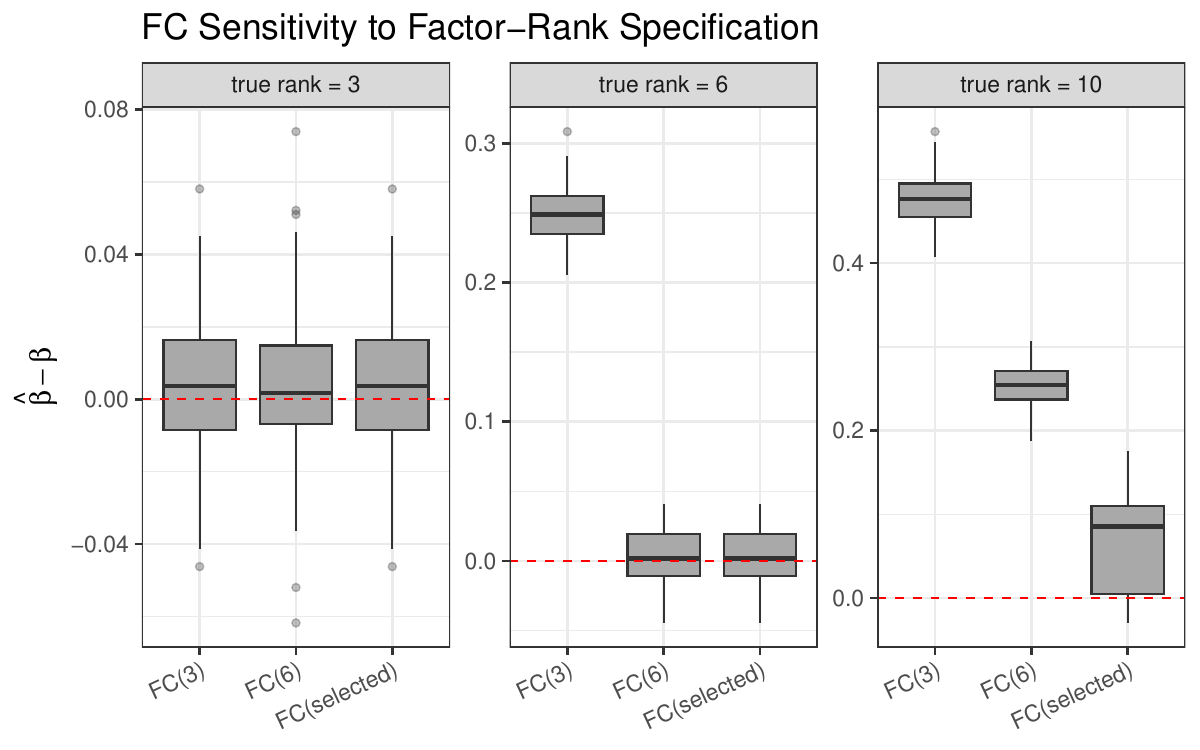}
\caption{Sampling distributions of estimation error under factor confounding, with true factor rank \(M \in \{3,6,10\}\). We compare FC fitted with fixed ranks 3 and 6 and a data-driven fit FC(selected).}
\label{fig:sim_rank}
\end{figure}

\begin{figure}[htbp]
\centering
\includegraphics[width=0.8\linewidth]{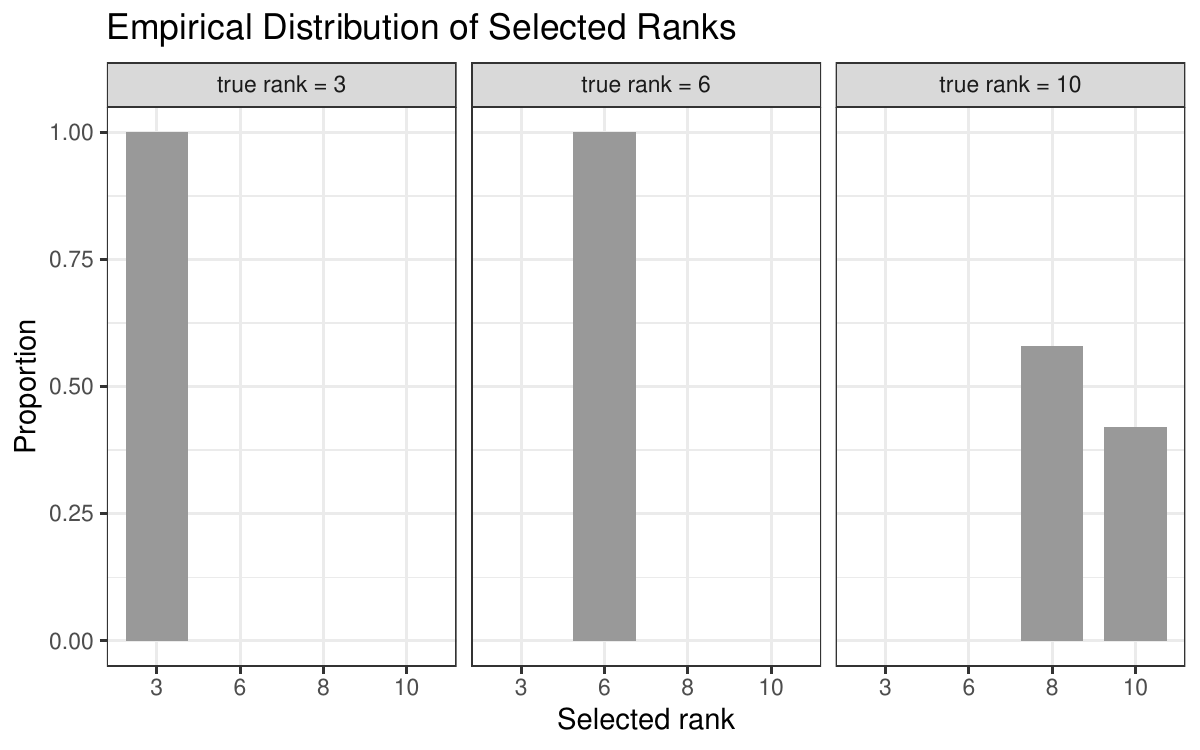}
\caption{Empirical distribution of the selected factor rank for FC(selected) across simulation replications, shown by true factor rank \(M \in \{3,6,10\}\).}
\label{fig:sim_rank_sel}
\end{figure}

When the true rank is 3, all FC variants perform similarly, indicating that moderate over-specification causes little loss in this setting. When the true rank is 6, FC(6) and FC(selected) remain close to unbiased, whereas FC(3) exhibits clear upward bias due to under-specification of the latent confounder dimension. When the true rank increases to 10, both FC(3) and FC(6) become substantially biased. FC(selected) partially adapts by choosing larger ranks more often and thereby reduces bias, but it does not eliminate it entirely. Overall, these results show that FC is reasonably robust to moderate over-specification of the factor rank, whereas under-specification can induce substantial bias once the true confounding dimension is high. They also suggest that data-driven rank selection can mitigate, though not fully remove, this sensitivity in finite samples.

% \clearpage

\subsection*{Robustness to Model Misspecification}

We examine the robustness of the estimators to both distributional and structural departures from the baseline factor confounding model. In these simulations, data are generated with \(N=30\) spatial locations, \(T=100\) time periods, and \(M=3\) latent factors. Rather than fixing the fitted rank, we compare NUC, IFE, FC(selected), and FC(enriched). FC(selected) uses the Bai--Ng-style criterion described above, while FC(enriched) fits \(M_{\mathrm{enrich}}=\min\{2\widehat M,M_{\max}\}\).

To assess distributional departures from the Gaussian exposure model (Assumption~5) while preserving the linear factor structure, we vary the distributions of the latent factors and residual errors. Specifically, we consider Gaussian, Student-\(t_4\), Uniform, skewed-normal, heteroskedastic, and discrete latent factor settings. In the Student-\(t\), Uniform, and skewed-normal settings, both the latent factors and the residual errors are drawn from the corresponding distribution and then centered and rescaled to achieve the target variance. In the heteroskedastic setting, the latent factors remain Gaussian, while the exposure and outcome residuals are generated from scaled Student-\(t_4\) distributions with unit-specific variance multipliers. We also consider a discrete latent factor design in which \(\mathbf U_t\) take values in a finite set of \(K=5\) latent states in \(\mathbb{R}^M\) with unequal state probabilities, while the residual errors remained Gaussian; the resulting latent factors are then centered and standardized componentwise.

\begin{figure}[htbp]
  \centering
  \includegraphics[width=0.8\linewidth]{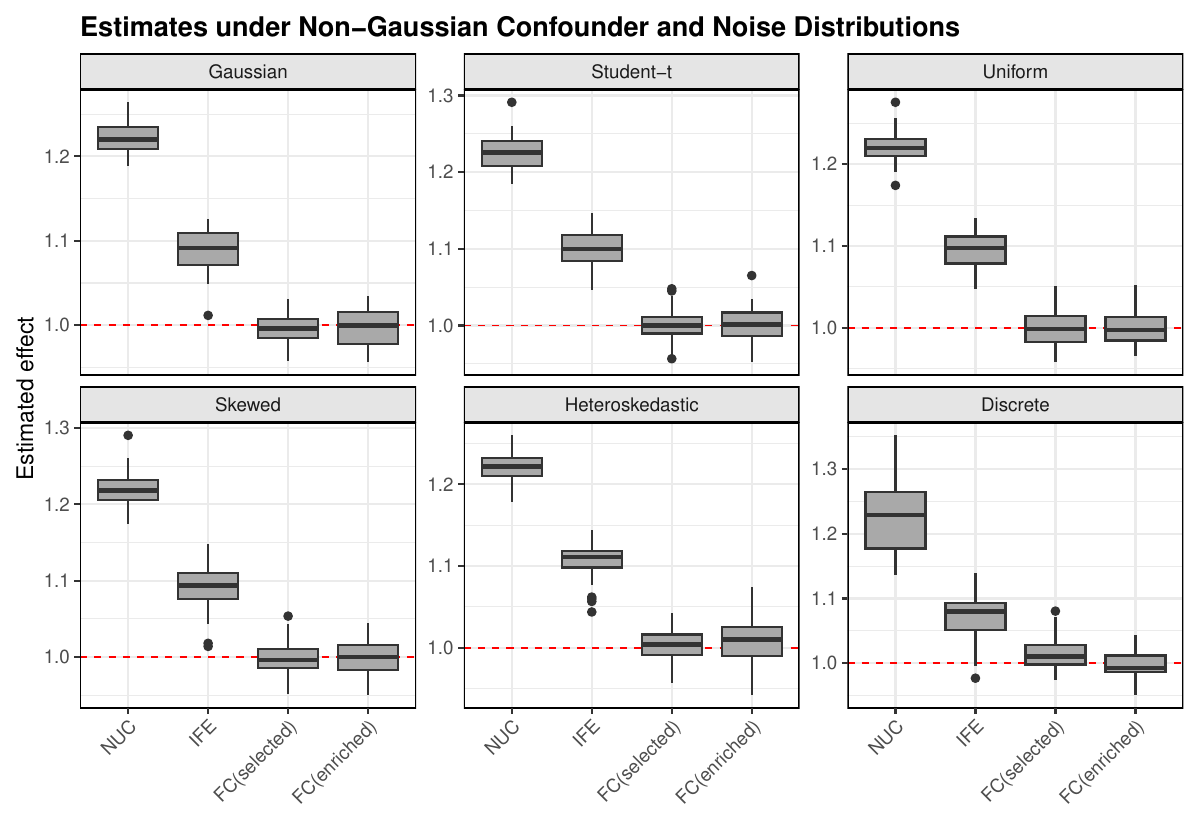}
  \caption{Sampling distributions of the estimated causal effect under non-Gaussian latent factor and residual distributions. The FC variants use the selected and enriched data-driven ranks, respectively. The dashed red line denotes the true causal effect.}
  \label{fig:sim_nongaussian}
\end{figure}

To study structural departures from the linear factor model (Assumption~4), we keep the latent factors and residual errors Gaussian but replace the linear factor index in either the exposure model or the outcome confounding term by a nonlinear transformation. In the nonlinear exposure settings, exposures are generated as
\[
\mathbf D_t = h(B\mathbf U_t) + \bm\xi_t,
\]
where \(h(\cdot)\) is applied componentwise to the linear exposure index and then centered and rescaled columnwise. We consider three choices of \(h\): a linear-plus-cubic transformation, a pure cubic transformation, and a \(\tanh\) transformation. Outcomes then remain linear in the latent confounders. In the nonlinear outcome settings, exposures remain linear in the latent factors, while the latent confounding term in the outcome equation is replaced by
\[
\mathbf Y_t(\mathbf d_t)
= \alpha\mathbf d_t + g(\Gamma \mathbf U_t) + \bm\varepsilon_t,
\]
where \(g(\cdot)\) is again applied componentwise and then centered and rescaled columnwise. Here we consider linear-plus-cubic, \(\sin\), and \(\tanh\) transformations.

\begin{figure}[htbp]
  \centering
  \includegraphics[width=0.8\linewidth]{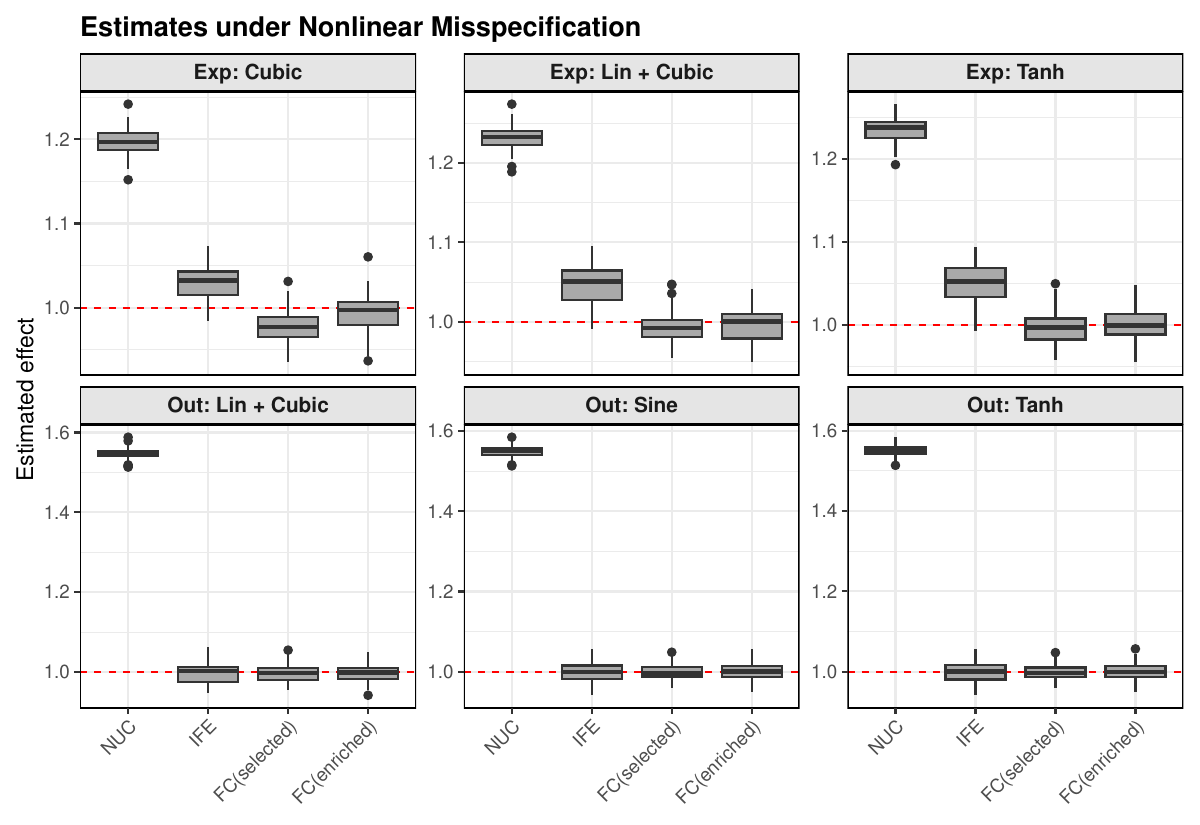}
  \caption{Sampling distributions of the estimated causal effect under nonlinear exposure and outcome models. The FC variants use the selected and enriched data-driven ranks, respectively. The dashed red line denotes the true causal effect.}
  \label{fig:sim_nonlinear}
\end{figure}

Figures~\ref{fig:sim_nongaussian} and~\ref{fig:sim_nonlinear} summarize the results. Across the non-Gaussian settings, FC(selected) is nearly unbiased, and FC(enriched) is similarly centered with a modest increase in dispersion in some settings. IFE is generally intermediate and NUC is consistently upward biased. Under nonlinear misspecification, the impact depends on where the nonlinearity enters. When the outcome-confounding term is nonlinear but the exposure process remains linear in the latent factors, both FC variants remain close to the truth, suggesting that the relevant confounding structure can still be recovered from the exposures. In contrast, when the exposure model itself is nonlinear, the observed exposures are only approximately low-rank in a linear factor representation. In the pure cubic exposure setting, the parsimonious selector chooses rank \(3\) in all replications and shows a small residual downward bias; the enriched rule fits rank \(6\) and largely removes this bias, with only a modest variance increase. Overall, these results suggest that the proposed estimator is robust to a range of distributional departures and to nonlinear outcome confounding, while emphasizing that approximate nonlinear exposure confounding benefits from rank-enriched sensitivity analysis.

% \clearpage

\subsection*{Neighborhood Interference}
\label{sec:sim_nearest_neighbor}

Our identification strategy accommodates partial interference, provided suitable negative controls are available. To illustrate this, we examine spatial neighborhood interference, where the outcome at each location is influenced not only by the local exposure but also by the exposures of its three nearest spatial neighbors. Specifically, we assume the following linear model:
\begin{equation}
Y_{it} = \beta_0 + \beta_1D_{it} + \beta_2\bar D_{it}+ \gamma_i^{\top}\Sigma_{U\mid D}^{-1/2} \mathbf U_{t} + \varepsilon_{it} \label{eqn:neighborhood_interference}
\end{equation}
where $\bar{D}_{it}$ denotes the average exposure from the three nearest neighbors, $\beta_1$ captures the direct local effect, and $\beta_2$ is the indirect spillover effect. We generate $100$ spatiotemporal datasets, assuming $N=50$ spatial locations, $T=200$ time periods, $M=4$ unmeasured confounders, and true effects $\beta_1 = 1$ and $\beta_2 = 0.5$. Here, we compare three methods for estimation: (i) standard DML without unmeasured confounder adjustment, (ii) the IFE estimator, and (iii) the proposed factor confounding estimator. Figure~\ref{fig:sim_interference} shows that both naive DML and IFE are biased in estimating $\beta_1$ and $\beta_2$. In contrast, the factor confounding approach provides nearly unbiased estimates of both effects, while also achieving lower variance than IFE.

\begin{figure}[htbp]
  \centering
  \includegraphics[width=0.8\linewidth]{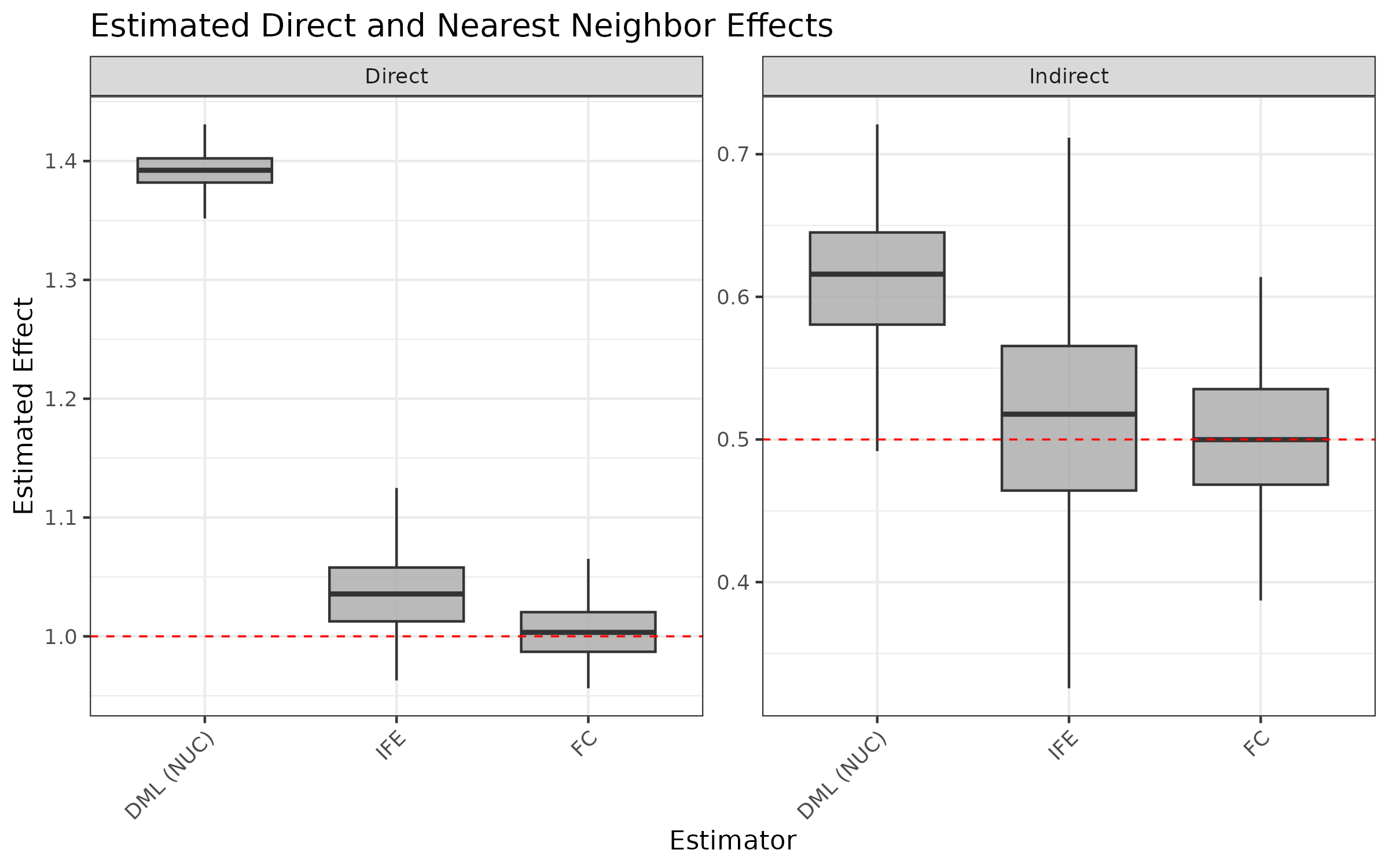}
  \caption{Sampling distributions of estimated direct and spillover effects under neighborhood interference from model \eqref{eqn:neighborhood_interference}. Dashed red lines denote the true effects.}
  \label{fig:sim_interference}
\end{figure}

\subsection*{Nonlinear Heterogeneous Effects}
\label{sec:appendix_nonlinear_sim}

We next consider a model with heterogeneous and nonlinear effects. We set $T = 1000,\; N = 6,\; M = 2$ and $p = 2$. The observed covariates and latent factors are generated as $\mathbf X_t \sim \mathcal N_p(0,I_p)$ and $\mathbf U_t \sim \mathcal N_M(0,I_M)$, and the residual terms satisfy $\bm \xi_t \sim \mathcal N_N\bigl(0,\sigma_\xi^2 I_N\bigr)$ and $\bm \varepsilon_t \sim \mathcal N_N\bigl(0,\sigma_\varepsilon^2 I_N\bigr)$, with $\sigma_\xi = \sigma_\varepsilon = 0.5.$ The parameter matrices are given by
\[\
\alpha = 
  \begin{pmatrix}
   1.0 & -1.2\\[2pt] 0.8 &  0.5\\[2pt] -0.4 &  1.0\\[2pt] 0.6 & -0.7\\[2pt] 1.5 &  0.9
  \end{pmatrix},\qquad
B =
  \begin{pmatrix}
    1.0 &  0.3\\ 0.4 & -0.8\\ -0.6 & 1.0\\ -0.7 & -0.3\\ 1.0 & -0.5
  \end{pmatrix},\qquad
\tilde\Gamma =
  \begin{pmatrix}
    0.4 & -0.7\\ -0.3 & 0.2\\ 0.8 & 0.2\\ 0.2 & 0.7\\ 0.5 & 0.4
  \end{pmatrix}.                            
\]
The exposures and outcomes are generated as   
\begin{align*}
    D_{it}  \;&=\; \alpha_{i1}\,\sin X_{1t} \;+\; \alpha_{i2}\,(X_{2t}^{2}-1) \;+\; B_{i\cdot} \mathbf U_t \;+\; \xi_{it},\\
    Y_{it} \;&=\; g_i(D_{it},X_{1t},X_{2t}) \;+\; \tilde\Gamma_{i\cdot} \mathbf U_t \;+\; \varepsilon_{it},
\end{align*}
Thus, both the exposure and outcome models are nonlinear, and the treatment effect varies across outcomes through the functions $g_i$. Figure~\ref{fig:nonlinear_simulation} compares the estimated and true causal effect curves evaluated at the mean of $\mathbf X$. The naive estimator is positively biased for outcomes $1$ and $5$ and negatively biased for outcomes $2$, $3$, and $4$, whereas the proposed factor confounding approach successfully removes these biases.  

% where $g_i$ at mean $\mathbf X$ is the true effect in Figure~\ref{fig:nonlinear_simulation}.
% If one regresses $Y_i$ on $(\mathbf D_i, \mathbf X)$ and ignores $\mathbf U$, the bias of $g$ is $\tilde\Gamma_{i\cdot}\,B_{i\cdot}^{\top}(B_{i\cdot}B_{i\cdot}^{\top}+\sigma_D^2)^{-1}(D_j-f_j(X)),j = 1,\dots,d$. 
% As shown in Figure~\ref{fig:nonlinear_simulation}, the naive estimates are positively biased for outcomes~1 and~5 and negatively biased for outcomes~2, 3, and~4, whereas our approach successfully removes these biases.              

\begin{figure}[htbp]
  \centering
  \includegraphics[width=0.8\linewidth]{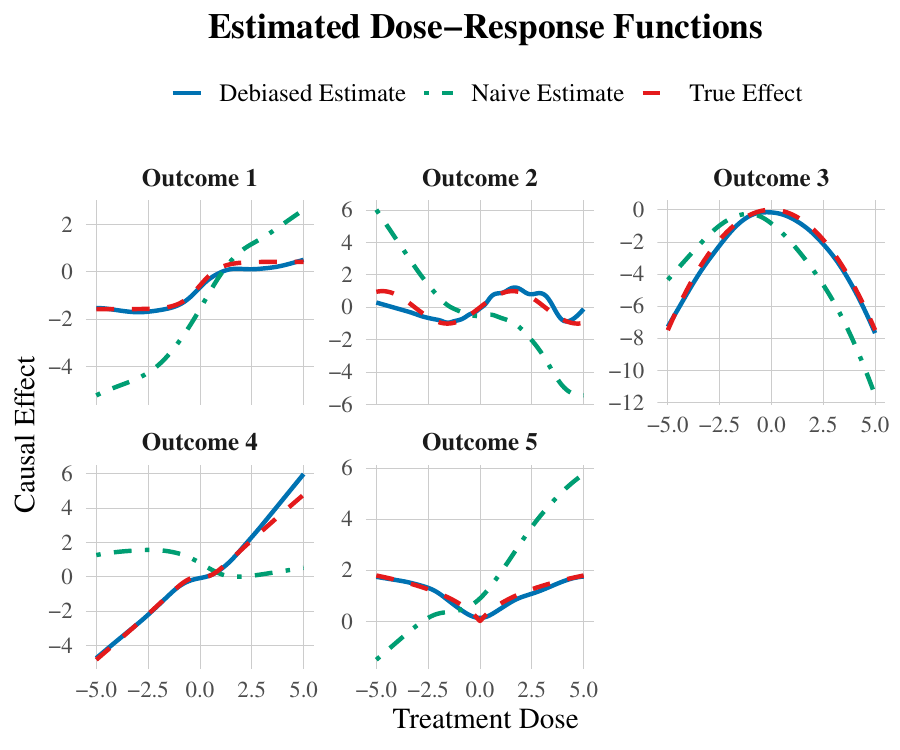}
  \caption{Comparison of true and estimated causal effect curves for each outcome under the nonlinear heterogeneous-effects setting, evaluated at the mean of observed covariates.}
  \label{fig:nonlinear_simulation}
\end{figure}

% (Heterogeneous treatment effect: $g(D_i,S_i)=\tau(S_i)D_i + h_Y(S_i)$.
% vary coefficients A and gamma and variances, N and T, random or fixed spatial component, 
% areal data? nonlinear, interference?)

\clearpage

\section{Additional Figures for the Birth Weight Application}

\begin{figure}[htbp]
  \centering
  \includegraphics[width=0.8\linewidth]{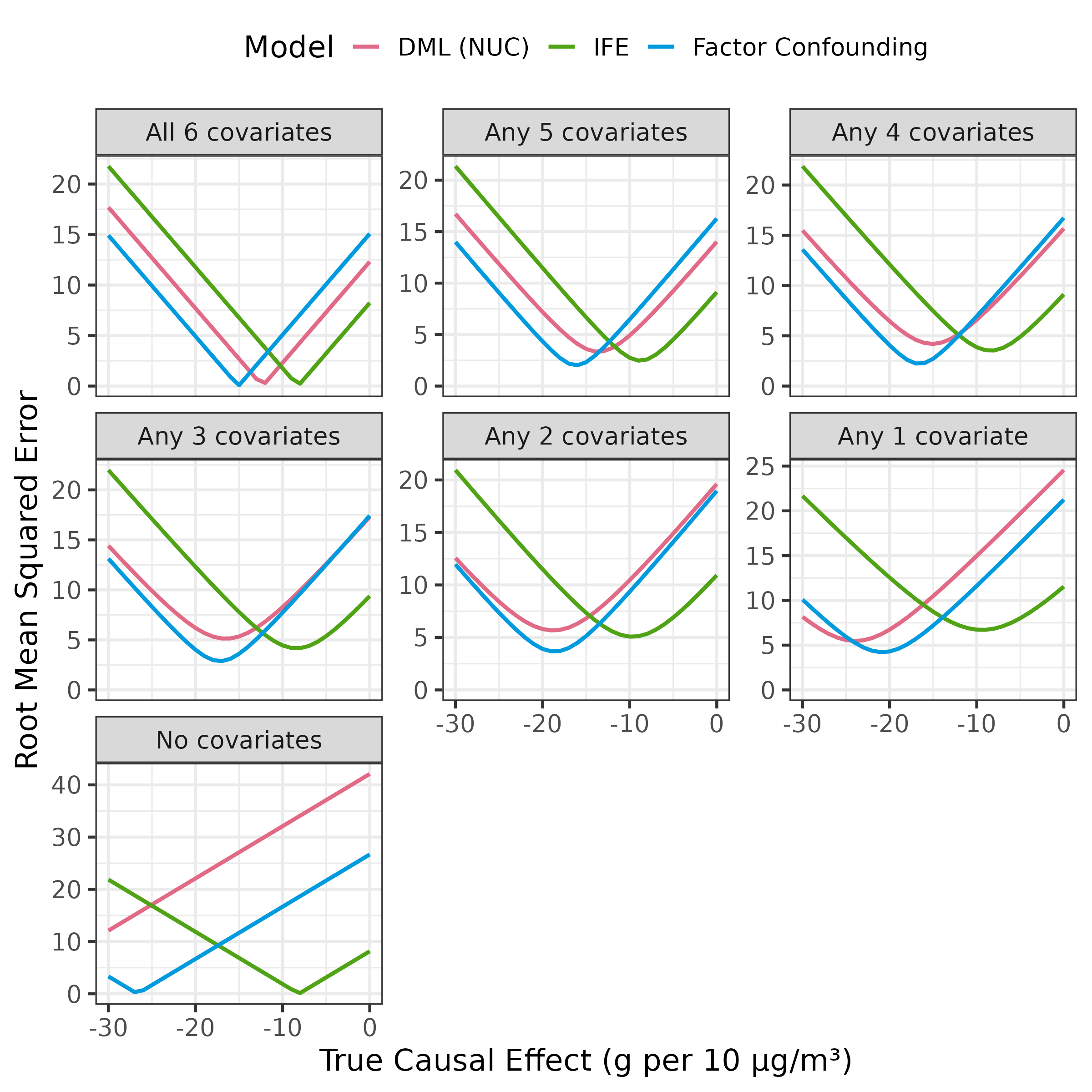}
  \caption{True causal effect versus RMSE of the estimates derived by adjusting for $\binom{6}{k}$ covariate groups, colored by model and faceted by $k$.  Factor confounding (blue) tends to have the lowest RMSE when the true causal effect is less than about $-15$ $g$, whereas IFE (green) performs best when the effect is larger than about $-12$ $g$.  Previous meta-analyses estimate the effect of a $10 \mu g / m^3$ increase in \pmtf on birth weight at around $-16$ $g$ \citep{gong2022maternal, gilbert2021causal}, or even greater reductions in the range of $-22$ $g$ to $-28$ $g$ \citep{ghosh2021ambient, uwak2021application}.} 
  \label{fig:sensitivity_rmse}
\end{figure}

\begin{figure}[htbp]
  \centering
  \includegraphics[width=0.7\linewidth]{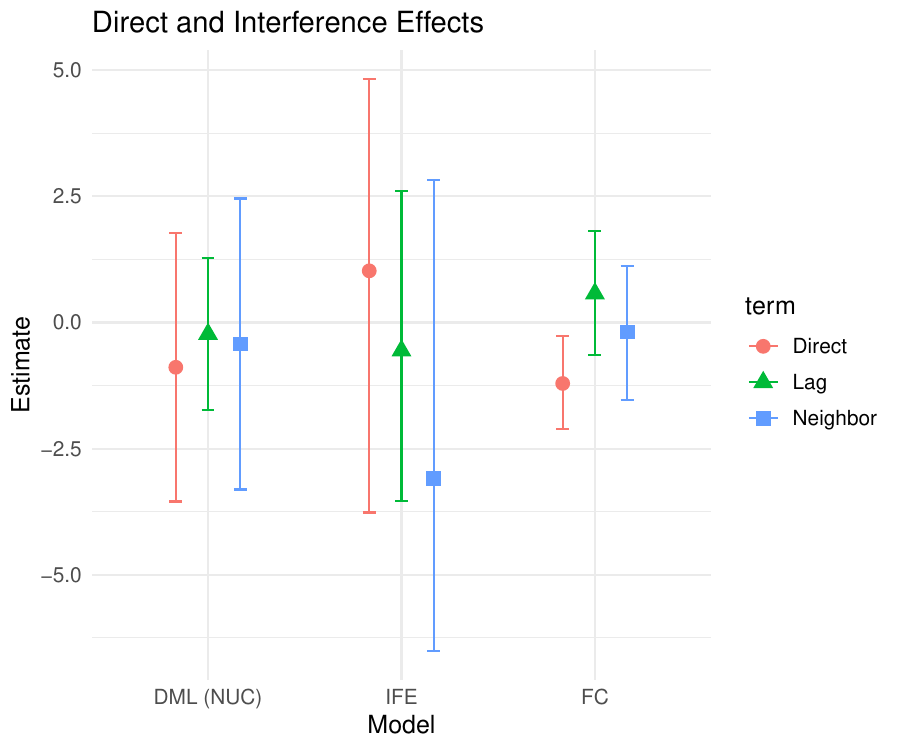}
  \caption{Estimated direct, lag, and neighborhood effects under DML (NUC), IFE, and factor confounding (FC) estimators. The factor confounding method recovers a significant direct effect with greater stability, while interference effects remain weak across all approaches. \label{fig:bw_interference}}
\end{figure}

\begin{figure}[htbp]
  \centering
  \includegraphics[width=0.9\linewidth]{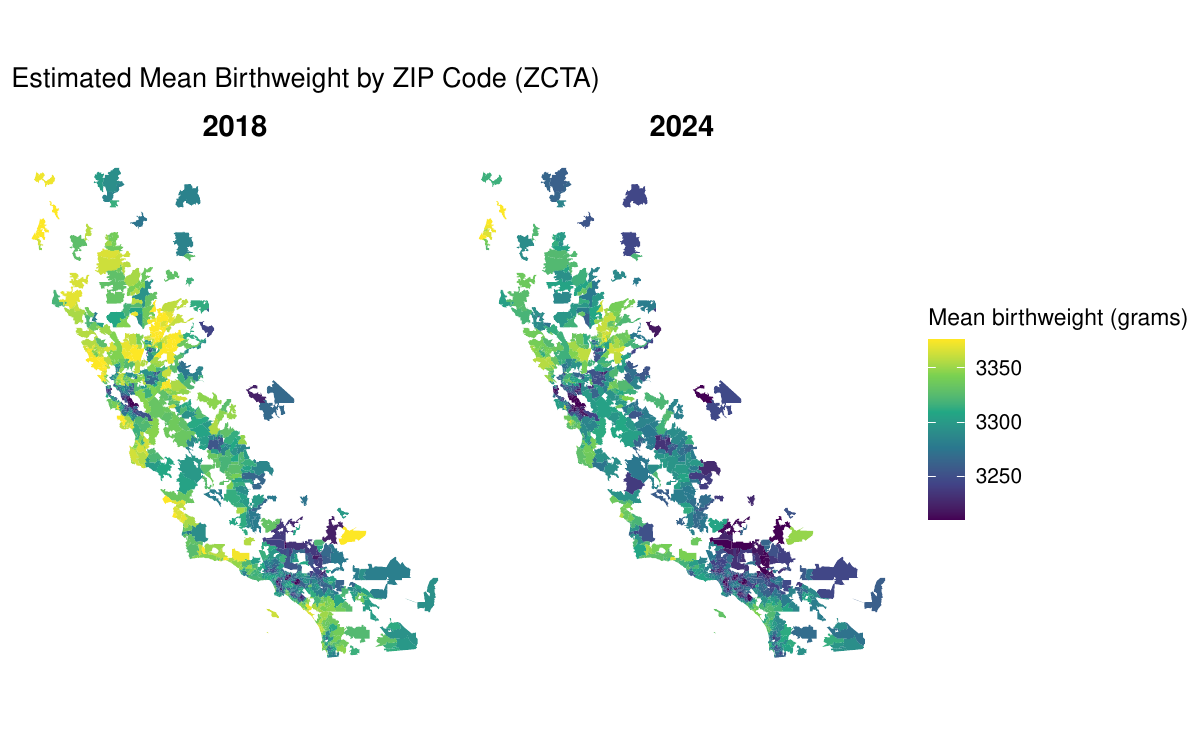}
  \caption{Estimated mean birth weight surface at the ZIP-code level for 2018 and 2024, based on adjustment for observed covariates, year indicators, and a smooth spatial surface in latitude and longitude. The fitted mean varies across both space and time, illustrating that broad spatiotemporal structure is modeled in the observed mean rather than being forced into the residual latent covariance.\label{fig:bw_mean_map}}
\end{figure}

\end{document}